\def\final{0}
\newcommand{\mynote}[1]{\marginpar{\tiny\sf #1}}
\newcommand{\mynote}[1]{}
\newtheorem*{rep@theorem}{\rep@title}
\newcommand{\newreptheorem}[2]{%
\newenvironment{rep#1}[1]{%
 \def\rep@title{#2 \ref{##1} (informal)}%
 \begin{rep@theorem}}%
 {\end{rep@theorem}}}
\theoremstyle{plain}
\newtheorem{theorem}{Theorem}%[section]
\newtheorem{proposition}{Proposition}
\newtheorem{lemma}{Lemma}
\newtheorem{corollary}{Corollary}
\theoremstyle{definition}
\newtheorem{definition}{Definition}
\def \next {\mathsf{next}}
\def \D {\mathcal{D}}
\def \G {\mathcal{G}}
\def \L {\mathcal{L}}
\def \M {\mathcal{M}}
\def \SD {\operatorname{SD}}
\def \cav {\operatorname{cav}}
\def \gets {\leftarrow}
\newcommand{\NN}{\mathbb{N}}
\newcommand{\RR}{\mathbb{R}}
\newcommand{\QQ}{\mathbb{Q}}
\newcommand{\E}{\mathop{\mathbf{E}}}
\newcommand{\eps}{\varepsilon}
\newcommand{\namedref}[2]{#1~\ref{#2}}
\newcommand{\thmlab}[1]{\label{thm:#1}}
\newcommand{\thmref}[1]{\namedref{Theorem}{thm:#1}}
\newcommand{\lemmlab}[1]{\label{lemm:#1}}
\newcommand{\lemmref}[1]{\namedref{Lemma}{lemm:#1}}
\newcommand{\corlab}[1]{\label{cor:#1}}
\newcommand{\seclab}[1]{\label{sec:#1}}
\newcommand{\secref}[1]{\namedref{Section}{sec:#1}}
\newcommand{\applab}[1]{\label{sec:#1}}
\newcommand{\appref}[1]{\namedref{Appendix}{sec:#1}}
\begin{document}

\title{When Can Limited Randomness Be Used in Repeated Games?}
\author{
Pavel Hub\'{a}\v{c}ek\thanks{Weizmann Institute of Science. Supported by the I-CORE Program of the Planning and Budgeting Committee and The Israel Science Foundation (grant No. 4/11). E-mail: \texttt{pavel.hubacek@weizmann.ac.il}.}
\and
Moni Naor\thanks{Weizmann Institute of Science. Incumbent of the Judith Kleeman Professorial Chair. Research supported in part by grants from the Israel Science Foundation, BSF and Israeli Ministry of Science and Technology and from the I-CORE Program of the Planning and Budgeting Committee and the Israel Science Foundation (grant No. 4/11). E-mail: \texttt{moni.naor@weizmann.ac.il}.}
\and
Jonathan Ullman\thanks{Columbia University Department of Computer Science.  Supported by a Junior Fellowship from the Simons Society of Fellows.  Part of this work was done while the author was at Harvard University.  E-mail: \texttt{jullman@cs.columbia.edu}.}
}
%\date{}
\maketitle

\pagenumbering{gobble}
\begin{abstract}
The central result of classical game theory states that every finite normal form game has a Nash equilibrium, provided that players are allowed to use randomized (mixed) strategies.
However, in practice, humans are known to be bad at generating random-like sequences, and true random bits may be unavailable.  Even if the players have access to enough random bits for a single instance of the game their randomness might be insufficient if the game is played many times.

In this work, we ask whether randomness is necessary for equilibria to exist in finitely repeated games.
We show that for a large class of games containing arbitrary two-player zero-sum games, approximate Nash equilibria of the $n$-stage repeated version of the game exist if and only if both players have $\Omega(n)$ random bits.
In contrast, we show that there exists a class of games for which no equilibrium exists in pure strategies, yet the $n$-stage repeated version of the game has an exact Nash equilibrium in which each player uses only a constant number of random bits.

When the players are assumed to be computationally bounded, if cryptographic pseudorandom generators (or, equivalently, one-way functions) exist, then the players can base their strategies on ``random-like'' sequences derived from only a small number of truly random bits.  We show that, in contrast, in repeated two-player zero-sum games, if pseudorandom generators \emph{do not} exist, then $\Omega(n)$ random bits remain necessary for equilibria to exist.

\end{abstract}

\vfill
\pagebreak

%\paragraph{Keywords}
%foo, bar

%--------------------------- Todo list --------------------------------%
\ifnum\final=1
{\color{blue}
  To do:
  \begin{itemize}
  \end{itemize}}
\fi
\vfill
\pagebreak

%--------------------------- TOC -------------------------------------%
\tableofcontents
\vfill
\pagebreak
\pagenumbering{arabic}

%--------------------------- Beginning of Intro --------------------------------%
\section{Introduction}\label{sec:Introduction}
The signature result of classical game theory states that a Nash equilibrium exists in every finite normal form game, provided that players are allowed to play randomized (mixed) strategies. It is easy to see in some games (e.g. Rock-Paper-Scissors) that randomization is necessary for the existence of Nash equilibrium.  However, the assumption that players are able to randomize their strategies in an arbitrary manner is quite strong, as sources of true randomness may be unavailable and humans are known to be bad at generating random-like sequences.

Motivated by these considerations, Budinich and Fortnow~\cite{DBLP:conf/sigecom/BudinichF11} investigated the question of whether Nash equilibria exist when players only have access to \emph{limited randomness}.  Specifically, they looked at the ``repeated matching pennies.''  Matching pennies is a very simple, two-player, two-action, zero-sum game in which the unique equilibrium is for each player to flip a fair coin and play an action uniformly at random.  If the game is repeated for $n$ stages, then the unique Nash equilibrium is for each player to play an independent, uniformly random action in each of the $n$ stages. Budinich and Fortnow considered the case where the players only have access to $\ll n$ bits of randomness, which are insufficient to play the unique equilibrium of the game, and showed that there does not even exist an \emph{approximate} equilibrium (where the approximation depends on the deficiency in randomness).  That is, if the players cannot choose independent, uniformly random actions in each of the $n$ stages, then no approximate equilibrium exists.

In this work, we further investigate the need for randomness in repeated games by asking whether the same results hold for \emph{arbitrary} games.  That is, we start with an arbitrary multi-player game such that Nash equilibria only exist if players can use $\beta$ bits of randomness.  Then we consider the $n$-stage repetition of that game.  Do equilibria exist in the $n$-stage game if players only have access to $\ll \beta n$ bits of randomness?  First, we show that the answer is essentially \emph{no} for arbitrary zero-sum games, significantly generalizing the results of Budinich and Fortnow.  On the other hand, we show that the answer is \emph{yes} for a large class of general games.

These results hold when both players are assumed to be computationally unbounded.  As noted by Budinich and Fortnow, if we assume that the players are required to run in polynomial time, and cryptographic pseudorandom generators (or, equivalently, one-way functions) exist, then a player equipped with only $\ll n$ truly random bits can generate $n$ \emph{pseudorandom bits} that appear truly random to a polynomial time adversary.  Thus, in the computationally bounded regime, if pseudorandom generators exist, then linear randomness is not necessary.  We show that, in contrast, in arbitrary repeated two-player zero-sum games, if pseudorandom generators \emph{do not} exist, then linear randomness remains necessary.

\subsection{Our Results}\label{sec:OurResults}
Suppose we have an arbitrary finite strategic game among $k$ players.  We consider the $n$-stage repetition of this game in which in each of the $n$ consecutive stages, each of the $k$ players simultaneously chooses an action (which may depend on the history of the previous stages).  We assume that in the $1$-stage game $\beta > 0$ bits of randomness for each player are necessary and sufficient for an equilibrium to exist.  We ask whether or not the existence of approximate equilibria in the $n$-stage game requires a linear amount of randomness ($\Omega(n)$ bits) per player.

\paragraph{The case of computationally unbounded players.}
Our first set of results concerns players who are computationally unbounded, which is the standard model in classical game theory.  In this setting, our first result shows that linear randomness is necessary for a large class of games including every two-player zero-sum game.

\begin{reptheorem}{thm:LinearEntropy}
For any $k$-player strategic game in which every Nash equilibrium achieves the minmax payoff profile, in any Nash equilibrium of its repeated version the players' strategies use randomness at least linear in the number of stages.
\end{reptheorem}
\noindent An important subset of strategic games where any Nash equilibrium achieves the minmax payoff profile is the class of two-player zero-sum games where, as implied by the von Neumann's minmax theorem, the concept of Nash equilibrium collapses to the minmax solution.
Hence, to play a Nash equilibrium in any finitely repeated two-player zero-sum game the players must use randomness at least linear in the number of stages.

Second, we show that the above results cannot be extended to arbitrary games.  That is, there exists a class of strategic games that, in their repeated version, admit ``randomness efficient'' Nash equilibria:
\begin{reptheorem}{thm:ConstantEntropy}
For any $k$-player strategic game in which for every player there exists a Nash equilibrium that achieves strictly higher expectation than the minmax strategy, there exists a Nash equilibrium of its repeated version where the players use total randomness independent of the number of stages.
\end{reptheorem}
\noindent
As we shall see, this result is related to the ``finite horizon Nash folk theorem,'' which roughly states that in finitely repeated games every payoff profile in the stage game that dominates the minmax payoff profile can be achieved as a payoff profile of some Nash equilibrium of the repeated game.

\paragraph{The case of computationally efficient players.}
For repeated two-player zero-sum games we study the existence of Nash equilibria with limited randomness when the players are computationally bounded.
Under the assumption that one-way functions do not exist (see the above discussion), we show that it is possible to \emph{efficiently exploit} any opponent (i.e., gain a non-negligible advantage over the value of the stage game) that uses low randomness in every repeated two-player zero-sum game.
Hence, in repeated two-player zero-sum games there are no computational Nash equilibria in which one of the players uses randomness sub-linear in the number of the stages.
\begin{reptheorem}{thm:EfficientAdvantage}
In any repeated two-player zero-sum game, if one-way functions do not exist, then for any strategy of the column player using sub-linear randomness, there is a computationally efficient strategy for the row player that achieves an average payoff non-negligibly higher than his minmax payoff in the stage game.
\end{reptheorem}
\noindent
The proof of this result employs the algorithm of Naor and Rothblum~\cite{DBLP:conf/icml/NaorR06} for learning adaptively changing distributions.
The main idea is to adaptively reconstruct the small randomness used by the opponent in order to render his strategy effectively deterministic and then improve the expectation by playing the best response.

\paragraph{Strong exploitation of low-randomness players.} In the classical setting, i.e., without restrictions on the computational power of the players, it was shown by Neyman and Okada~\cite{DBLP:journals/geb/NeymanO00} that in every repeated two-player zero-sum game it is possible to extract utility proportional to the randomness deficiency of the opponent.
On the other hand, our result in the setting with computationally efficient players guarantees only a non-negligible advantage in the presence of a low-randomness opponent.
This leaves open an intriguing question of how much utility can one \emph{efficiently} extract from an opponent that uses low randomness in a repeated two-player zero-sum game (see \secref{StrongExploitation} for additional discussion).

\paragraph{The case of matching pennies.}
As noticed by Budinich and Fortnow~\cite{DBLP:conf/sigecom/BudinichF11}, the repeated game of matching pennies exhibits clear tradeoffs between the randomness available to players and existence of $\eps$-Nash equilibria.
Our work generalizes their results already in the context of repeated matching pennies, since they assumed that the players randomize their strategies by flipping limited number of coins, whereas we only assume that the players' strategies are of low entropy.
Our results for the game of matching pennies are provided in \appref{MatchingPennies}.

\subsection{Other Related Work}\label{sec:RelatedWork}
In one of the first works to consider the relation between the randomness available to players and the existence of equilibria Halpern and Pass~\cite{halpern2014algorithmic} introduced a computational framework of machine games that explicitly incorporates the cost of computation into the utility functions of the players and specifically the possibility of randomness being expensive.
They demonstrated this approach on the game of Rock-Paper-Scissors, and showed that in machine games where randomization is costly then Nash equilibria do not necessarily exist. However, in machine games where randomization is free then Nash equilibria always exist.

Based on derandomization techniques, Kalyanaraman and Umans~\cite{kalyanaraman2007algorithms} proposed randomness efficient algorithms both for finding equilibria and for playing strategic games.
In the context of finitely repeated two-player zero-sum games where one of the players (referred to as the learner) is uninformed of the payoff matrix, they gave an adaptive on-line algorithm for the learner that can reuse randomness over the stages of the repeated game.

Halprin and Naor~\cite{DBLP:journals/crossroads/HalprinN10} suggested the possibility of using randomness generated by human players in repeated games for generation of pseudorandom sequences.
The strategic game they proposed for this purpose is a zero-sum two-player game.
As shown by our results, their choice improves the likelihood of extracting truly random bits from the gameplay, since the players must use linear randomness in the number of stages in equilibria of any repeated two-player zero-sum game.

%--------------------------- End of Intro -------------------------------------%

%--------------------------- Beginning of Notation -----------------------------%
\section{Notation and Background}\label{sec:Notation}
\subsection{Game Theoretic Background}
Here we provide the concepts from game theory that we use in this work (for an in-depth study see the classical text by Osborne and Rubinstein~\cite{osborne1994course}).

\begin{definition}[strategic game]\label{def:strategicGame}
A \emph{strategic game} $G=\langle N, (A_i), (u_i)\rangle$ is a tuple consisting of
%a finite set of players $N$,
%for each player $i\in N$ a nonempty set of actions $A_i$, and
%for each player $i\in N$ a utility function $u_i:A\rightarrow\RR$ assigning each action profile $a\in A=\times_{j\in N}A_j$ a real-valued payoff $u_i(a)$.
\begin{itemize}
\item a finite set of players $N$
\item for each player $i\in N$ a nonempty set of actions $A_i$
\item for each player $i\in N$ a utility function $u_i:A\rightarrow\RR$ assigning each action profile $a\in A=\times_{j\in N}A_j$ a real-valued payoff $u_i(a)$.
\end{itemize}
\end{definition}
In the special case when $G$ is a two-player zero-sum game we use the notation $\langle (A_1,A_2),u\rangle$ instead of $\langle \{1,2\},(A_1,A_2),(u_1,u_2)\rangle$, since there are only two players and $u_1(a)=-u_2(a)$ for all $a\in A_1\times A_2$.
We refer to player 1 as the row player (also known as Rowena) and to player 2 as the column player (also known as Colin).\footnote{We have adopted Colin and Rowena from Aumann and Hart~\cite{aumann2003long}.}

We denote by $S_i$ the set of mixed strategies of player $i$, i.e., the set $\Delta(A_i)$ of all probability distributions on the action space of player $i$. For a strategy profile $\sigma\in S=\times_{j\in N}S_j$ we use $\sigma_i$ to denote the strategy of player $i$ in $\sigma$ and $\sigma_{-i}$ to denote the profile of strategies of all the players in $N$ except for player $i$ in $\sigma$, and we write $\sigma$ equivalently as $(\sigma_i,\sigma_{-i})$.

\begin{definition}[Nash equilibrium in strategic game]
A \emph{Nash equilibrium} of a strategic game $\langle N, (A_i), (u_i)\rangle$ is a profile $\sigma$ %$\sigma \in S = \times_{j\in N}S_j$
of strategies with the property that for every player $i\in N$ we have
%$\E[u(\sigma_i,\sigma_{-i})] \geq \E[(\sigma'_{i}, \sigma_{-i})]$ for all $\sigma'_i \in S_i$.
 \[
 \E[u(\sigma_i,\sigma_{-i})] \geq \E[(\sigma'_{i}, \sigma_{-i})] \text{ for all } \sigma'_i \in S_i\ .
 \]
\end{definition}

\begin{definition}[minmax payoff]
The \emph{minmax payoff of player $i$} in strategic game $\langle N, (A_i), (u_i)\rangle$, denoted $v_i$, is
the lowest payoff that the other players \emph{can force upon player $i$}, i.e.,
%$v_i = \min_{\sigma_{-i}\in S_{-i}}\max_{\sigma_i\in S_i} \E[u_i (\sigma_{i},\sigma_{-i})]$.
\[
v_i = \min_{\sigma_{-i}\in S_{-i}}\max_{\sigma_i\in S_i} \E[u_i (\sigma_{i},\sigma_{-i})]\ .
\]
A \emph{minmax strategy of player $i$} in $G$ is a strategy $\hat{\sigma}_i\in S_i$ such that $ \E[u_i (\hat{\sigma}_{i},\sigma_{-i})]\geq v_i$ for all $\sigma_{-i}\in S_{-i}$.
\end{definition}

\begin{definition}[feasible and individually rational payoff profile]
An \emph{individually rational payoff profile} of $G$ is a vector $p\in\RR^{|N|}$ that weakly dominates the minmax payoff of every player, i.e., a vector for which $p_i\geq v_i$ for all $i\in N$.
A vector $p\in\RR^{|N|}$ is a \emph{feasible payoff profile of $G$} if there exists a collection $\{\alpha_a\}_{a\in A}$ of nonnegative rational numbers such that
$\sum_{a\in A}\alpha_a = 1$ and $p_i = \sum_{a\in A}\alpha_a u_i(a)$ for all $i\in N$.
\end{definition}
Note that since in every finite strategic game a Nash equilibrium always exists, there also always exists an individually rational payoff profile (the payoff profile of the Nash equilibrium). However, the Nash equilibrium payoff profile is not necessarily feasible in the above sense.

\begin{definition}[$n$-stage repeated game]
Let $G=\langle N, (A_i), (u_i)\rangle$ be a strategic game. An \emph{$n$-stage repeated game of $G$} is an extensive form game with perfect information and simultaneous moves $G^n=\langle N,H,P,(u^{*}_i)\rangle$ in which:
\begin{itemize}
\item $H=\{\emptyset\}\cup\{\bigcup_{t=1}^{n}A^{t}\}$, where $\emptyset$ is the initial history and $A^{t}$ is the set of sequences of action profiles in $G$ of length $t$
\item $P(h)=N$ for each non-terminal history $h\in H$
\item $u^{*}_i(a^1,\ldots ,a^n)=\frac{1}{n}\sum_{t=1}^{n}u_i(a^t)$ for every terminal history $(a^1,\ldots ,a^n)\in A^n$.
\end{itemize}
A \emph{behavioral strategy of player $i$} is a collection $(\sigma_i(h))_{h\in H\setminus A^n}$ of independent probability measures (one for each non-terminal history), where each $\sigma_i(h)$ is a probability measure over $A_i$.
\end{definition}

\begin{definition}[Nash equilibrium in $n$-stage repeated game]
A \emph{Nash equilibrium} of an $n$-stage repeated game of $G=\langle N, (A_i), (u_i)\rangle$ is a profile $\sigma$ of behavioral strategies with the property that for every player $i\in N$ and every behavioral strategy $\sigma'_i$, we have
%$\E[u^{*}(\sigma_i,\sigma_{-i})] \geq \E[u^{*}(\sigma'_{i}, \sigma_{-i})]$.
 \[
 \E[u^{*}(\sigma_i,\sigma_{-i})] \geq \E[u^{*}(\sigma'_{i}, \sigma_{-i})]\ .
 \]
\end{definition}

\subsection{Cryptographic Background}\label{sec:CryptoNotation}

\paragraph{Pseudorandom generators and one-way functions.}
The notion of cryptographic pseudorandom generators was introduced by Blum and Micali~\cite{DBLP:journals/siamcomp/BlumM84}, who defined them as algorithms that produce sequences of bits unpredictable in polynomial time, i.e., no efficient next-bit-test is able to predict the next output of the pseudorandom generator given the sequence of bits generated so far.
As Yao~\cite{DBLP:conf/focs/Yao82a} showed, this is equivalent to  a generator whose output is indistinguishable from a truly random string to any polynomial time observer.
One of the central questions in cryptography is to understand the assumptions that are sufficient and necessary for implementing a particular cryptographic task.
Impagliazzo and Luby~\cite{DBLP:conf/focs/ImpagliazzoL89} (see also Impagliazzo~\cite{impagliazzo1992thesis}) showed that one-way functions are essential for many cryptographic primitives (e.g., private-key encryption, secure authentication, coin-flipping over telephone).
H\aa{}stad, Impagliazzo, Levin and Luby~\cite{DBLP:journals/siamcomp/HastadILL99} showed that pseudorandom generators exist if and only if one-way functions exist.
Therefore the existence of one-way functions is the major open problem of cryptography.
For an in depth discussion see Goldreich~\cite{DBLP:books/cu/Goldreich2001}.

\paragraph{Standard notation.}A function $\mu:\NN\rightarrow\RR^+$ is \emph{negligible} if for all $c\in\NN$ there exists $n_c\in\NN$ such that for all $n \geq n_c$, $\mu(n) \leq n^{-c}$.
A function $\mu:\NN\rightarrow\RR^+$ is \emph{noticeable} if there exists $c\in\NN$ and $n_c\in\NN$ such that for all $n \geq n_c$, $\mu(n) \geq n^{-c}$.

\begin{definition}[statistical distance]
The statistical distance between two distributions $X$ and $Y$ over $\{0,1\}^\ell$, denoted by $\SD(X,Y)$, is defined as:
\[
\SD(X,Y) = \frac{1}{2}\sum_{\alpha\in\{0,1\}^\ell}{\left|\Pr[X = \alpha] - \Pr[Y = \alpha]\right|}\ .
\]
\end{definition}

The most fundamental notion for measuring randomness is the Shannon entropy:

\begin{definition}[Shannon entropy]
Given a probability distribution $\rho \in \Delta(A)$, the \emph{Shannon entropy of $\rho$} is defined as
$$
H(\rho) := \E_{a \gets \rho}\left( \log_2 \left(\frac{1}{\Pr\left(\rho = a\right)} \right)\right).
$$
\end{definition}

As mentioned above, if we have a one-way function then many cryptographic primitives are possible and in particular we can stretch a short seed into a long seemingly random one. Hence, we will be interested in the case that such functions do not exist.

\begin{definition}[almost one-way function]
A function $f$ is an \emph{almost one-way function} if it is computable in polynomial time,
and for infinitely many input lengths, for any PPTM $\M$, the probability that $\M$ inverts $f$ on a random input is negligible.
Namely, for any polynomial $p$, there exist infinitely many choices of $n \in \mathbb{N}$ such that
\[
\Pr_{x\sim U_{k(n)},\,\M}[\M(f(x)) \in f^{-1}(x)] < \frac{1}{p(n)}\ .
\]
\end{definition}

%--------------------------- End of Notation ----------------------------------%

%--------------- Beginning of Nash Equilibria with Low Entropy ----------------%

%-------------------------- Low Entropy Nash ----------------------------------%
\section{Low-Entropy Nash Equilibria of Finitely Repeated Games}\label{sec:unbounded}
In this section we show that, in the setting with players that have unbounded computational power, there are two classes of $k$-player strategic games at the opposite sides of the spectrum with respect to the amount of randomness necessary for equilibria of their repeated versions.

To measure the randomness of a player's strategy we consider the maximal total Shannon entropy of his strategies used along any terminal history.

\begin{definition}[Shannon entropy of a strategy in repeated game]\label{def:EntropyOfStrategy}
Let $G=\langle N, (A_i), (u_i)\rangle$ be a finite strategic game and let $\sigma_i$ be a strategy of player $i$ in the $n$-stage repeated game of $G$. For any terminal history $a=(a^1,\ldots,a^{n})\in A^{n}$, let $(\sigma_i(\emptyset),\sigma_i(a^1),\sigma_i(a^1,$ $a^2),\ldots,\sigma_i(a^1,\ldots,a^{n-1}))$ be the $n$-tuple of strategies of player $i$ in $\sigma_i$ at all the non-terminal subhistories of $a$. We define the \emph{Shannon entropy of $\sigma_i$}, denoted as $H(\sigma_i)$, as
\[
H(\sigma_i):=\max_{a\in A^{n}}\left\{H(\sigma_i(\emptyset))+\sum_{j=1}^{n-1}{H(\sigma_i(a^1,\ldots,a^j))}\right\}\ .
\]
\end{definition}
This is a worst case notion, in that it measures the entropy of the strategy of player $i$ irrespective of the strategies of the other players.
For some of our results we consider its alternative variant of \emph{effective Shannon entropy of a strategy $\sigma_i$ in a strategy profile $\sigma$}, i.e.,
the maximal total entropy of $\sigma_i$ along terminal histories that are sampled in $\sigma$ with non-zero probability.

For the restricted class of games in which any Nash equilibrium payoff profile is exactly the minmax payoff profile (e.g. any two-player zero-sum game), the following proposition relates the Nash equilibria of the strategic game to the structure of Nash equilibria in its $n$-stage repeated version.\footnote{A variant of Proposition~\ref{prop:OptimalEntropy} with respect to pure equilibria is given in Osborne and Rubinstein~\cite{osborne1994course} as Proposition 155.1.}

\begin{proposition}\label{prop:OptimalEntropy}
Let $G=\langle N, (A_i), (u_i)\rangle$ be a strategic game such that any Nash equilibrium payoff profile is equal to the minmax payoff profile.
For all $n\in\NN$, if $\sigma$ is a Nash equilibrium of $G^n=\langle N,H,P,(u^{*}_i)\rangle$, the $n$-stage repeated game of $G$, then for every non-terminal history $h\in H$ sampled with non-zero probability by $\sigma$ the strategy profile $\sigma(h)$ is a Nash equilibrium of $G$.
\end{proposition}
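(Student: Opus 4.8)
The plan is to prove the statement by backward induction on the length of the history $h$, exploiting the fact that in the one-shot game $G$ every Nash equilibrium yields exactly the minmax payoff profile $(v_i)_{i\in N}$. The key conceptual point is that, along any history reached with positive probability, the continuation payoff that any player $i$ can guarantee from that point onward is bounded above by his minmax value $v_i$ (the other players can always hold him to $v_i$ in each remaining stage by playing their minmax strategies), and it is bounded below by $v_i$ as well whenever $\sigma$ is an equilibrium, because player $i$ could deviate to the strategy ``play a one-shot best response to $\sigma_{-i}(h)$ now, then play your minmax strategy forever after.'' Equating these bounds forces the per-stage payoff at $h$ to be exactly $v_i$ for every $i$, which by hypothesis means $\sigma(h)$ is a Nash equilibrium of $G$.

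Concretely, I would proceed as follows. First, fix a Nash equilibrium $\sigma$ of $G^n$ and a history $h$ of length $t$ reached with positive probability. Let $w_i(h)$ denote the expected average continuation payoff to player $i$ under $\sigma$ conditioned on reaching $h$, i.e.\ the average over the remaining $n-t$ stages. I would establish, by induction on $n-t$ running from $0$ upward (equivalently, backward induction on $t$), that $w_i(h) = v_i$ for every $i$ and every positively-reached $h$, and simultaneously that $\sigma(h)$ is a Nash equilibrium of $G$. The base case $t = n$ is vacuous (there is no continuation); for the inductive step at a history $h$ of length $t < n$, the continuation value decomposes as the current-stage expected payoff $\E_{a\sim\sigma(h)}[u_i(a)]$ averaged against the (inductively determined) continuation values $v_i$ at the successor histories $h\cdot a$. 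The upper bound $w_i(h) \le v_i$: consider the strategy for the other players of ignoring $\sigma_{-i}$ and playing their minmax strategies against $i$ in every remaining stage — this caps $i$'s continuation at $v_i$, and since $\sigma$ is an equilibrium, $\sigma_{-i}$ restricted to the subgame must already hold $i$ to at most... more carefully, the clean direction is the lower bound, so I would instead argue: since $\sigma$ is an equilibrium of $G^n$, player $i$ cannot improve his overall payoff by deviating only within the subtree rooted at $h$ (a subgame-style argument; here one must be slightly careful because Nash, unlike subgame-perfect, only constrains deviations that are reached with positive probability — but $h$ is reached with positive probability, so a deviation there does affect $i$'s overall expectation and hence is constrained). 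The candidate deviation is: upon reaching $h$, play a one-shot best response to the marginal action distribution induced by $\sigma_{-i}(h)$, then revert to the minmax strategy $\hat\sigma_i$ for all subsequent stages. This guarantees $i$ a current-stage payoff of at least $\max_{\sigma_i'}\E[u_i(\sigma_i',\sigma_{-i}(h))]$ and at least $v_i$ in each later stage, so $w_i(h) \ge \frac{1}{n-t}\big(\max_{\sigma_i'}\E[u_i(\sigma_i',\sigma_{-i}(h))] + (n-t-1)v_i\big)$. Combining with the generic upper bound $w_i(h)\le v_i$ (from the other players' ability to minmax $i$ in the subgame — which again holds because any equilibrium continuation payoff cannot exceed what $i$ can guarantee, and $i$ cannot guarantee more than $v_i$ per stage) yields $\max_{\sigma_i'}\E[u_i(\sigma_i',\sigma_{-i}(h))] \le v_i$, i.e.\ no player can beat $v_i$ against $\sigma(h)$ in the one-shot game. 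Since $\sigma(h)$ is a mixed profile in $G$ and some player achieving less than $v_i$ is impossible in equilibrium of $G^n$ combined with the same bounds, one gets that $\sigma(h)$ is an equilibrium of $G$ achieving payoff profile $(v_i)$, which by hypothesis is the minmax profile; this closes the induction and gives $w_i(h)=v_i$ at $h$ as well.

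The main obstacle I anticipate is the interplay between the Nash (as opposed to subgame-perfect) equilibrium condition and the induction: Nash equilibrium of $G^n$ only rules out profitable deviations that actually change a player's expected payoff, so the argument must be confined to histories reached with positive probability — which is exactly the hypothesis of the proposition, and which is why the induction must be set up to carry the statement ``$\sigma(h)$ is an equilibrium of $G$ and $w_i(h)=v_i$'' jointly over all positively-reached $h$, pushing information forward along positively-reached successors only. A secondary subtlety is correctly formalizing the ``upper bound'' direction $w_i(h)\le v_i$: the cleanest route is to note that whatever $\sigma_{-i}$ does in the subgame, the most player $i$ could get by best-responding is at most $v_i$ per stage (since the opponents \emph{could} switch to minmax and that would only lower $i$'s payoff relative to a free best response — one has to phrase this as: $i$'s equilibrium continuation payoff equals his best-response payoff to $\sigma_{-i}$ in the subgame, which is at most his best-response payoff had opponents played minmax, which is $v_i$). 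Once these two bounds are pinned down, the equality and the reduction to the one-shot equilibrium condition are routine, and the entropy statement of the theorem it feeds into follows because a positive-probability history forces a genuinely mixed stage equilibrium whenever the one-shot game requires randomness.
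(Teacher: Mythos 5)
Your overall strategy---backward induction over positively-reached histories, using the deviation ``play a one-shot improvement at $h$, then revert to the minmax strategy $\hat\sigma_i$''---is exactly the paper's argument; the paper merely packages the induction as a contradiction at the \emph{longest} history where $\sigma(h)$ fails to be a stage-game equilibrium. However, your write-up hinges on an upper bound $w_i(h)\le v_i$ whose justification is genuinely wrong: you claim that player $i$'s best-response payoff against $\sigma_{-i}(h)$ ``is at most his best-response payoff had opponents played minmax, which is $v_i$.'' The inequality runs the other way. By definition $v_i=\min_{\sigma_{-i}}\max_{\sigma_i}\E[u_i(\sigma_i,\sigma_{-i})]$, so the minmaxing profile \emph{minimizes} the best-response payoff, and best-responding to an arbitrary $\sigma_{-i}(h)$ yields \emph{at least} $v_i$, not at most. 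Likewise, ``an equilibrium continuation payoff cannot exceed what $i$ can guarantee'' is false in general games (think of coordination payoffs far above the minmax level). A related slip appears in your opening summary: forcing the per-stage payoff profile to equal $(v_i)_i$ does not by itself make $\sigma(h)$ a Nash equilibrium of $G$; the hypothesis only says that equilibria have minmax payoffs, not the converse, so you must establish the best-response property directly.

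The good news is that the upper bound is superfluous, and the repair is already contained in your own setup. Your inductive hypothesis gives the \emph{equality} $(n-t)\,w_i(h)=\E_{a\sim\sigma(h)}[u_i(a)]+(n-t-1)v_i$ (every positively-reached successor has continuation value exactly $v_i$). Your deviation-based lower bound gives $(n-t)\,w_i(h)\ge \max_{\sigma_i'}\E[u_i(\sigma_i',\sigma_{-i}(h))]+(n-t-1)v_i$, since the minmax strategy secures at least $v_i$ per stage against whatever $\sigma_{-i}$ does off-path and the deviation at a positively-reached $h$ cannot be profitable. Subtracting yields $\E_{a\sim\sigma(h)}[u_i(a)]\ge\max_{\sigma_i'}\E[u_i(\sigma_i',\sigma_{-i}(h))]$ for every player $i$, i.e.\ $\sigma(h)$ is a Nash equilibrium of $G$; the hypothesis on $G$ then forces its payoff profile to be $(v_i)_i$, which closes the induction with $w_i(h)=v_i$. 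With that rerouting your proof is correct and coincides in substance with the paper's.
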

\begin{proof}
Assume to the contrary that there exists a Nash equilibrium $\sigma$ of $G^n$ such that for some non-terminal history $h\in H$, sampled with non-zero probability by $\sigma$, the strategy profile $\sigma(h)$ is not a Nash equilibrium of $G$.
 Let $h$ be without loss of generality the longest history such that $\sigma(h)$ is not a Nash equilibrium of $G$.
 There exists a player $i$ with a profitable deviation $\sigma^{*}_{i}$ in the stage game to his strategy in the strategy profile $\sigma(h)$.
 Consider the strategy $\sigma'_i$ of player $i$ in $G^n$ defined in the following way: $\sigma'_i(h')=\sigma_i(h')$ for any history $h'\in H$ that does not contain $h$ as a subhistory, $\sigma'_i(h)=\sigma^{*}_{i}$ for the history $h$, and $\sigma'_i(h'')$ is the minmax strategy $\hat{\sigma}_i$ of player $i$ in $G$ for any history $h''\neq h$ that contains $h$ as a subhistory.

 Note that for any history $h'\in H$ that does not contain $h$ as a subhistory, $\E[u_i((\sigma'_i,\sigma_{-i})(h'))]=\E[u_i((\sigma_i,\sigma_{-i})(h'))]$ by the construction of $\sigma'_i$.
 Since the minmax strategy $\hat{\sigma}_i$ of player $i$ guarantees at least the minmax payoff $v_i$ (equal to any Nash equilibrium payoff of player $i$ in $G$),
 $\E[u_i((\sigma'_i,\sigma_{-i})(h''))]\geq\E[u_i((\sigma_i,\sigma_{-i})(h''))]$ for any history $h''\neq h$ that contains $h$ as a subhistory.
 Finally, $\E[u_i((\sigma'_i,\sigma_{-i})(h))]>\E[u_i((\sigma_i,\sigma_{-i})(h))]$ because $\sigma^{*}_{i}$ is a profitable deviation for player $i$ in $G$ given the strategy profile $\sigma(h)$.

Recall that the history $h$ is sampled in $\sigma$ with non-zero probability, and hence $\E[u^{*}_{i}(\sigma'_i,\sigma_{-i})]>\E[u^{*}_{i}(\sigma_i,\sigma_{-i})]$, i.e.,
 the alternative strategy $\sigma'_i$ increases the expectation of player $i$ in $G^n$ given that the other players follow $\sigma_{-i}$, a contradiction to $\sigma$ being a Nash equilibrium of $G^n$.
\end{proof}

For strategic games from this class, Proposition~\ref{prop:OptimalEntropy} immediately gives a linear lower bound on entropy needed to play Nash equilibria in their repeated games.

\begin{theorem}\thmlab{LinearEntropy}
Let $G$ be a strategic game such that any Nash equilibrium payoff profile is equal to the minmax payoff profile. For all $n\in\NN$ and every player $i\in N$, if in any Nash equilibrium of $G$ the strategy of player $i$ is of entropy at least $\beta_i$ then in any Nash equilibrium of the $n$-stage repeated game of $G$ the strategy of player $i$ is of entropy at least $n\beta_i$.
\end{theorem}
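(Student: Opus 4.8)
The plan is to derive the theorem directly from Proposition~\ref{prop:OptimalEntropy} by exhibiting a single terminal history along which player $i$'s stage strategies have total entropy at least $n\beta_i$; since the quantity $H(\sigma_i)$ of Definition~\ref{def:EntropyOfStrategy} is a maximum over all terminal histories, one such witness suffices.

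First I would fix an arbitrary Nash equilibrium $\sigma$ of the $n$-stage repeated game $G^n$ and an arbitrary player $i \in N$, and build a terminal history $a = (a^1, \ldots, a^n)$ greedily, maintaining the invariant that every proper prefix is reached with non-zero probability under $\sigma$. Start with the initial history $\emptyset$, which is reached with probability $1$. Given a non-terminal history $h_j = (a^1, \ldots, a^j)$ reached with non-zero probability, the profile $\sigma(h_j)$ is a product of probability measures over the finite nonempty sets $A_k$, so its support is nonempty; pick any action profile $a^{j+1}$ in that support. Because the players' behavioral strategies randomize independently across stages and players, the extended history $h_{j+1} = (a^1, \ldots, a^{j+1})$ is also reached with non-zero probability. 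Iterating for $j = 0, \ldots, n-1$ produces the desired terminal history $a$.

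Next I would invoke Proposition~\ref{prop:OptimalEntropy}: for each $j \in \{0, 1, \ldots, n-1\}$ the prefix $h_j$ is a non-terminal history sampled by $\sigma$ with non-zero probability (here $h_0 = \emptyset$), hence $\sigma(h_j)$ is a Nash equilibrium of $G$. By the hypothesis of the theorem, in every Nash equilibrium of $G$ player $i$'s strategy has entropy at least $\beta_i$, so $H(\sigma_i(h_j)) \geq \beta_i$ for each such $j$. Summing these $n$ inequalities gives
\[
H(\sigma_i(\emptyset)) + \sum_{j=1}^{n-1} H(\sigma_i(a^1, \ldots, a^j)) \;\geq\; n\beta_i,
\]
and since by Definition~\ref{def:EntropyOfStrategy} $H(\sigma_i)$ is the maximum of the left-hand side over all terminal histories, we conclude $H(\sigma_i) \geq n\beta_i$.

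There is no genuinely hard step here — the theorem is essentially an immediate corollary of Proposition~\ref{prop:OptimalEntropy}. The only point that needs care is that the entropy measure is defined as a worst-case maximum over \emph{all} terminal histories, not merely over those in the support of $\sigma$, so one cannot simply assert that ``every reachable stage is a stage-game equilibrium and therefore contributes $\beta_i$''; instead one must exhibit a concrete, fully supported path of length $n$ and use it as the witness realizing a lower bound on the maximum. The independence of the behavioral strategies across stages is what guarantees such a path exists.
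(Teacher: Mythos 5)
Your proof is correct and is essentially the paper's argument: the paper proves the same statement by contradiction (if $H(\sigma_i) < n\beta_i$ then some positively-reached prefix $h^*$ would have $H(\sigma_i(h^*)) < \beta_i$, contradicting Proposition~\ref{prop:OptimalEntropy} plus the hypothesis on $G$), which is just the contrapositive of your direct summation along a fully supported terminal history. Your explicit construction of that witness path is a point the paper leaves implicit, but it is the same proof.
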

\begin{proof}
Assume to the contrary that there exists a Nash equilibrium $\sigma$ of the $n$-stage repeated game of $G$ with strategy of entropy strictly smaller than $n\cdot\beta_i$ for player $i$.
By Proposition~\ref{prop:OptimalEntropy}, $\sigma(h)$ is a Nash equilibrium of $G$ for all $h$ sampled by $\sigma$ with non-zero probability. Hence, there must exist a history $h^*\in H$ sampled with non-zero probability in $\sigma$ such that $\sigma(h^*)$ is a Nash equilibrium of $G$ and the entropy $H(\sigma_{i}(h^{*}))$ of $\sigma_{i}(h^{*})$ is strictly smaller than $\beta_i$, a contradiction.
\end{proof}

\begin{figure}[h]%
\def\arraystretch{1.5}
\begin{center}
\begin{tabular}{r|c|c|c|c|}
\multicolumn{1}{r}{}
 & \multicolumn{1}{c}{Left ($L$)}
 & \multicolumn{1}{c}{Heads ($H$)}
 & \multicolumn{1}{c}{Tails ($T$)}
 & \multicolumn{1}{c}{Right ($R$)}\\
\cline{2-5}
~~~Up ($U$)~~ & $~~0,-1$ & $~~0,-1$ & $~~0,-1$ & $~~0,~~0$ \\
\cline{2-5}
~~~Heads ($H$)~~ & $~~0,-1$ & $~~1,-1$ & $-1,~~1$ & $-1,~~0$ \\
\cline{2-5}
~~~Tails ($T$)~~ & $~~0,-1$ & $-1,~~1$ & $~~1,-1$ & $-1,~~0$\\
\cline{2-5}
~~~Down ($D$)~~ & $~~0,~~0$ & $-1,~~1$ & $-1,~~1$ & $~~1,~~0$ \\
\cline{2-5}
\end{tabular}
\end{center}
\caption{The payoff matrix of an extended game of matching pennies.}%
\label{fig:ExtendedMatchingPennies}%
\end{figure}

\paragraph{Repeated non-zero-sum game requiring a lot of randomness.}
\thmref{LinearEntropy} applies not only to two-player zero-sum games but also to some non-zero-sum games.
The game $G$ given by the payoff matrix in Figure~\ref{fig:ExtendedMatchingPennies} is a variant of the game of matching pennies where the players have two additional options.
There are three mixed Nash equilibria in $G$: $(\frac{1}{2}H+\frac{1}{2}T,\frac{1}{2}H+\frac{1}{2}T)$, $(\frac{1}{2}U+\frac{1}{2}D,\frac{1}{2}H+\frac{1}{2}R)$, and $(\frac{1}{2}U+\frac{1}{2}D,\frac{1}{2}T+\frac{1}{2}R)$; all the three Nash equilibria achieve the same payoff profile $(0,0)$ and require each player to use one random bit.
Notice that the row player can get utility $0$ irrespective of the strategy of the column player by selecting his action ``Up'', and similarly the column player can ensure utility $0$ by playing ``Right''.
Hence, the minmax payoff profile is $(0,0)$.
Since none of the three Nash equilibria of $G$ improves over the minmax payoff profile, we get by \thmref{LinearEntropy} that each player must use strategy of entropy at least $n$ in any Nash equilibrium of the $n$-stage repeated game of $G$.

\paragraph{Repeated non-zero-sum game requiring low randomness.}
On the other hand, there are strategic games for which \thmref{LinearEntropy} does not apply, and the players may use in the $n$-stage repeated game equilibrium strategies of entropy proportional only to the entropy needed in the single-shot game.

\begin{figure}[t]%
\def\arraystretch{1.5}
\begin{center}
\begin{tabular}{r|c|c|c|c|}
\multicolumn{1}{r}{}
 &  \multicolumn{1}{c}{Cooperate ($C$)}
 & \multicolumn{1}{c}{Heads ($H$)}
 & \multicolumn{1}{c}{Tails ($T$)}
 & \multicolumn{1}{c}{Punish ($P$)}\\
\cline{2-5}
~~~Cooperate ($C$)~~~ & $~~3,~~3$ & $-3,~~6$ & $-3,~~6$ & $-3,-3$ \\
\cline{2-5}
~~~Heads ($H$)~~~ & $~~6,-3$ & $ ~~1,-1$ & $-1,~~1$ & $-3,-3$ \\
\cline{2-5}
~~~Tails ($T$)~~~ & $~~6,-3$ & $-1,~~1$ & $ ~~1,-1$ & $-3,-3$ \\
\cline{2-5}
~~~Punish ($P$)~~~ & $-3,-3$ & $-3,-3$ & $-3,-3$ & $-4,-4$ \\
\cline{2-5}
\end{tabular}
\end{center}
\caption{The payoff matrix of an extended game of matching pennies.}%
\label{fig:PenniesWithPunishment}%
\end{figure}

Consider for example the strategic game $G$ given by the payoff matrix in Figure~\ref{fig:PenniesWithPunishment}.
The strategy profile $\sigma=(\frac{1}{2}H+\frac{1}{2}T,\frac{1}{2}H+\frac{1}{2}T)$ is the unique Nash equilibrium of $G$ that achieves payoff profile $(0,0)$.
The minmax payoff profile is $(-3,-3)$, since any player can get utility at least $-3$ by playing $C$.
We show that the $n$-stage repeated game of $G$ admits a Nash equilibrium that requires only a single random coin, i.e.,
the same amount of randomness as the Nash equilibrium $\sigma$ of the stage game $G$.
Consider the strategy profile in which both players play $C$ in the first $n-1$ rounds and in the last round each player plays $H$ and $T$ with equal probability,
and if any player deviates from playing $C$ in one of the first $n-1$ rounds then the opponent plays $P$ throughout all the remaining stages.
To see that this strategy profile is a Nash equilibrium of the $n$-stage repeated game of $G$ note that any deviation from playing $C$ in the first $n-1$ rounds can increase the utility of any player by at most $3$ (by playing either $H$ or $T$ instead of $C$),
however the subsequent punishment induces a loss of at least $-3$ which renders any deviation unprofitable.

The randomness efficient Nash equilibrium from the above example resembles the structure of Nash equilibria constructed in the proof of the \emph{Nash folk theorem for finitely repeated games}.
This theorem characterizes the payoff profiles that can be achieved by Nash equilibria of the repeated game.
In particular, it shows that in strategic games $G$ such that for very player $i$ there exists a Nash equilibrium $\sigma_i$ strictly improving over his minmax payoff any feasible payoff profile (i.e., any convex combination of payoff profiles in $G$ with rational coefficients) that is individually rational (i.e., achieves at least the minmax level for every player) can be approximated by a Nash equilibrium of sufficiently long finitely repeated game of $G$ (cf.\ Osborne and Rubinstein~\cite{osborne1994course} for a survey of known folk theorems).

The main idea behind the proof of the folk theorem is that for every player $i$ the gap between the payoff in the Nash equilibrium $\sigma_i$ and the minmax payoff $v_i$ can be used to punish the player in case he deviates from the strategy that approximates any feasible and individually rational payoff profile. In particular, in any such Nash equilibrium the players use a fixed number of rounds (independent of the number of stages $n$) before the last round in which they play according to some (possibly mixed) Nash equilibria of the stage game and in the preceding rounds they play pure strategies so that the overall payoff approximates the feasible payoff profile.
Hence, the amount of randomness on all the equilibrium paths is independent of the number of stages in any such Nash equilibrium of the repeated game.

\begin{theorem}\thmlab{ConstantEntropy}
Let $G$ be a strategic game such that for every player $i$ there exists a Nash equilibrium $\sigma_i$ of $G$ in which the payoff of player $i$ exceeds his minmax payoff $v_i$ and there exists a feasible and individually rational payoff profile in $G$. Let $\beta_i$ be such that in any Nash equilibrium of $G$ the strategy of player $i$ is of entropy at most $\beta_i$.
There exists $c\in\NN$ such that for all sufficiently large $n\in\NN$ and every player $i\in N$ there exists a Nash equilibrium of $G^n$, the $n$-stage repeated game of $G$, in which the strategy of player $i$ is of \emph{effective entropy} at most $c\cdot\beta_i$.
\end{theorem}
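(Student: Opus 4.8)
The plan is to construct, for each player $i$, a specific behavioral strategy profile in $G^n$ that realizes the ``finite-horizon Nash folk theorem'' structure and then verify that it is both a Nash equilibrium and has effective entropy bounded by $c \cdot \beta_i$. Concretely, fix a feasible and individually rational payoff profile $p$ of $G$; write it as a convex combination $p = \sum_{a \in A} \alpha_a u(a)$ with nonnegative rational coefficients summing to $1$. Choose a common denominator $q$ so that each $\alpha_a = m_a / q$ with $m_a \in \NN$. The equilibrium path will be: partition a prefix of the $n$ stages into blocks that realize each action profile $a$ for an $m_a$-fraction of the rounds (cycling through a fixed schedule of length $q$), leaving a fixed number $r$ of ``reward'' rounds at the end — independent of $n$ — in which each player $i'$ plays according to the favorable Nash equilibrium $\sigma_{i'}$. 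The punishment rule is the standard one: the first player to deviate from the prescribed pure action (during the non-reward rounds) gets minmaxed by the other players for all remaining stages.

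The key steps, in order: (1) Specify the schedule precisely, including how the last $r$ rounds are chosen and how large $r$ must be; $r$ is determined by the requirement that the cumulative gain $\delta_i := \mathbf{E}[u_i(\sigma_i)] - v_i > 0$ earned over the $r$ reward rounds strictly exceeds the largest one-shot deviation gain $\max_{a,a_i'} u_i(a_i', a_{-i}) - u_i(a)$, which is a constant of the stage game; this fixes $r$ as a constant independent of $n$. (2) Verify the no-deviation condition along the equilibrium path: a deviation in a non-reward round nets at most the one-shot gain above, but triggers minmax punishment (justified by the definition of minmax strategy, which guarantees the punishers can hold the deviator to $v_i$ thereafter), so over the remaining $\geq r$ rounds the net change in the average payoff $u_i^*$ is negative — here one must be a little careful that the deviation happens early enough that at least $r$ rounds remain, which is why the reward rounds sit at the very end and the punishment is immediate. (3) Verify no profitable deviation within the reward rounds: there the profile restricted to those rounds is a product of stage-game Nash equilibria, so no single-round deviation helps, and since those are the last rounds there is no future to exploit. (4) Bound the effective entropy: on every terminal history reached with nonzero probability, player $i$ plays deterministically in all non-reward rounds (entropy $0$) and plays the mixed strategy $(\sigma_{i})_i$ in each of the $r$ reward rounds, contributing at most $r \cdot \beta_i$; if a deviation by the opponent occurs, player $i$ switches to a (possibly mixed) minmax strategy thereafter — this could add entropy, so one should pick a \emph{pure} minmax strategy when one exists, or more robustly note that the punishment branches can be taken to be arbitrary fixed minmax strategies and absorb their total entropy into the constant, giving a bound of the form $(r + n)\cdot(\text{const})$... which is not good enough, so in fact one must use the \emph{effective} entropy: punishment histories arise only off the equilibrium path of $\sigma_{-i}$, hence the worst-case-over-reachable-histories quantity only sees the equilibrium path, where the bound is $r \cdot \beta_i$. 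Setting $c := r + 1$ (or just $c := r$ if the single-shot bound $\beta_i$ already accounts for one NE's worth) completes the argument.

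I expect the main obstacle to be step (4) combined with the precise bookkeeping of ``effective entropy.'' The definition of effective entropy in the paper is the max total entropy along terminal histories sampled with nonzero probability \emph{in the profile $\sigma$}; since the constructed $\sigma_{-i}$ never deviates, the punishment branches of $\sigma_i$ are never reached, so they contribute nothing to the effective entropy — but one has to state this cleanly and make sure the reward rounds are genuinely a constant number, which forces a careful choice of $r$ as a function only of the stage game $G$ (its payoff magnitudes and the gap $\delta_i$), not of $n$. A secondary subtlety is the approximation issue: the folk-theorem construction only approximates $p$, and one must check that the approximation error (coming from $r/n \to 0$ and from integer-rounding of the $\alpha_a q$) does not interfere with the equilibrium verification — it does not, because the equilibrium property relies only on the punishment gap being a fixed positive constant, not on hitting $p$ exactly. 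Everything else (the minmax punishment argument, the product-of-NE argument in the last rounds) is routine given Proposition~\ref{prop:OptimalEntropy}'s style of reasoning and the definitions already in place.
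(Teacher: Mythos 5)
Your proposal is correct and follows essentially the same route as the paper: realize a feasible, individually rational profile by a deterministic schedule of pure action profiles for all but a constant number $r$ of final ``reward'' rounds, play the favorable stage-game Nash equilibria $\sigma_{i'}$ in those last rounds, deter deviations by minmax punishment, choose $r$ as a constant of $G$ so that forfeiting the reward phase outweighs any one-shot deviation gain, and observe that the effective entropy only sees the on-path histories, where player $i$ randomizes in at most $r$ stages for a total of at most $r\cdot\beta_i$. Your resolution of the punishment-branch entropy issue (off-path histories do not contribute to effective entropy) is exactly the point the paper's statement of the theorem is designed around.
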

\begin{proof}
Let $p\in\RR^{|N|}$ be the feasible and individually rational payoff profile of $G$.
There exist coefficients $\{\alpha_a\}_{a\in A}\subset\QQ$ such that $\sum_{a\in A}\alpha_a=1$ and for all $i\in N$, $p_i=\sum_{a\in A}\alpha_a u_i(a)$.
Let $K$ be the smallest integer such that each $\alpha_a$ can be written as $\alpha'_a/K$ for $\alpha'_a\in\NN$.
For some $\ell\in\NN$, we divide the stages in $G^n$ into two parts of length $\ell\cdot K$ and $m=n-\ell\cdot K$.
Let $s$ be a strategy profile in $G^n$ that schedules the first $\ell\cdot K$ stages such that each action profile $a$ for which $\alpha_a\neq 0$ is played by the players in exactly $\ell\cdot\alpha'_a$ number of stages.
In the remaining $m$ stages the players cycle between the Nash equilibria $\{\sigma_i\}_{i\in N}$, i.e.,
for all $j\in\{0,\ldots,m-1\}$ at the stage $n-m+1+j$ the players play the Nash equilibrium $\sigma_{j'}$, where $j'=1+(j\mod|N|)$.
In case any player $i$ deviates from $s$ in one of the first $\ell\cdot K$ rounds, the remaining players play the strategy that forces the minmax level $v_i$ on player $i$.

Note that if the number $m$ of the last stages is such that for all action profiles $a\in A$ with $\alpha_a\neq 0$ and for every player $i$:
\[
\frac{m}{|N|}\left(\sum_{j\in N}{\E[u_i(\sigma_j)]}-|N|v_i \right)
\geq
\max_{a'_i\in A_i} u_i(a'_i,a_{-i})-u_i(a)\ ,
\]
then no player has a profitable deviation and $\sigma$ is a Nash equilibrium of $G^n$.
The number $m$ of last stages can be bounded by some constant $c$ selected independently of $n$.
Since the number of stages in which the players play according to some Nash equilibrium of $G$ is at most $c$ (the players take pure actions in all the first $n-c$ stages), for any player $i$ the effective entropy of $s_i$ in $s$ is at most $c\cdot\beta_i$.
\end{proof}

\paragraph{Randomness in Subgame Perfect Equilibria of Finitely Repeated Games.}
An unavoidable shortcoming of the solution concept of Nash equilibrium in the context of repeated (and in general extensive form) games is that it is possible for equilibria to be established based on non-credible threats.
This issue can be circumvented by the stronger requirement of \emph{subgame perfection} that demands the players' strategies to be best response at every history (even off the equilibrium path), and hence implicitly eliminates all empty threats.

Since any subgame perfect equilibrium is a Nash equilibrium, the linear lower bound on the amount of entropy applies to subgame perfect equilibria when the minmax payoff profile cannot be improved upon by any Nash equilibrium in the stage game.
On the other hand, it is possible to construct a randomness efficient subgame perfect equilibrium in the $n$-stage repeated game if in the underlying game there are two Nash equilibria with different payoffs for each player. Such subgame perfect equilibrium is constructed in the proof of perfect finite horizon Folk theorem of Beno\^{i}t and Krishna~\cite{benoit1985finitely}.

\paragraph{Characterization of games with randomness efficient equilibria.}
The condition on the structure of the stage game in \thmref{ConstantEntropy} (i.e., that for every player there exists a Nash equilibrium of the stage game that strictly improves over his minmax payoff) is the same as in the Nash Folk theorem of Beno\^{i}t and Krishna~\cite{benoit1987nash}.
We leave it as an open problem whether ideas from a proof of a more general finite horizon Nash folk theorem (e.g. the one given by Gonz{\'{a}}lez{-}D{\'{\i}}az~\cite{DBLP:journals/geb/Gonzalez-Diaz06}) could help extend (or characterize) the class of games that admit randomness efficient equilibria in their repeated versions.
%------------------------ End of Low Entropy Nash -----------------------%

%--------------------- Low Entropy Computational Nash -------------------%
\section{Low-Entropy Computational Nash Equilibria of Finitely Repeated Two-Player Zero-Sum Games}\label{sec:ComputationalGeneral}
In this section we study randomness in equilibria of repeated two-player zero-sum games with computationally efficient players.
The solution concept we consider in this setting is \emph{computational Nash equilibrium} (introduced in the work of Dodis, Halevi and Rabin~\cite{DBLP:conf/crypto/DodisHR00}) that assumes that the players are restricted to computationally efficient strategies and indifferent to negligible improvements in their utilities, i.e., a computational Nash equilibrium is analogous to the concept of $\eps$-Nash equilibrium with a negligible $\eps$, where the player's strategies, as well as any deviations, must be computationally efficient.

To capture the requirement of computational efficiency, the players' strategies must be implemented by families of polynomial-size circuits.
For a two-player zero-sum game $G$, we denote by \emph{repeated game of $G$} the infinite collection $\{G^n\}_{n\in\NN}$ of all the $n$-stage repeated games of $G$. A family of polynomial size circuits $\{C_n\}_{n\in\NN}$ implements the strategy of the row player in the repeated game of $G$ as follows. In $G^n$, the $n$-stage repeated game of $G$, the circuit $C_n$ takes as input a string corresponding to a non-terminal history $h$ in $G^n$ and $s(n)$ random bits; it outputs an action to be taken at history $h$.
If the strategy of player $i\in\{1,2\}$ is implemented by family $\{C_{n}^{i}\}_{n\in\NN}$ then the gameplay in the $n$-stage repeated game of $G$ is defined in the following way: player $i$ samples a random string $r_i\in\{0,1\}^{s_i(n)}$ and at each stage of $G^n$ takes the action $a=C_{n}^{i}(h,r_i)\in A_i$, given that the history of play up to the current stage is $h$. The utility function $u_{n}^{*}$ is for all $n$ defined as in the standard $n$-stage repeated game of $G$ (i.e., it is the average utility achieved in the stage game over the $n$ stages).

\begin{definition}[computational Nash equilibrium of repeated game]
For a two-player zero-sum game $G=\langle (A_1,A_2),u\rangle$, \emph{a computational Nash equilibrium of the repeated game of $G$} is a strategy profile $(\{C_{n}^{1}\}_{n\in\NN},\{C_{n}^{2}\}_{n\in\NN})$ given by polynomial-size circuit families such that for every player $i\in\{1,2\}$ and every strategy $\{\tilde{C}_{n}^{i}\}_{n\in\NN}$ given by a polynomial-size circuit family it holds for all large enough $n\in\NN$ that
 \[
 \E[u^{*}_{n}(C_{n}^{i},C_{n}^{-i})] \geq \E[u^{*}_{n}(\tilde{C}_{n}^{i},C_{n}^{-i})]+\eps(n)\ ,
 \]
where $\eps$ is a negligible function.
\end{definition}

We show that if one-way functions do not exist, then in repeated two-player zero-sum games there are no computational Nash equilibria in which the players' strategies use random strings of length sub-linear in the number of the stages.

Our result follows by showing that
finding efficiently a best response to the opponent's strategy that uses limited randomness can be seen as a special case of the problem of \emph{learning an adaptively changing distribution} (introduced by Naor and Rothblum~\cite{DBLP:conf/icml/NaorR06}).
The goal in their framework is for a learner to recover a secret state used to sample a publicly observable distribution, in order to be able to predict the next sample.
In particular, this would allow the learner to be competitive to someone who knows the secret state (Naor and Rothblum~\cite{DBLP:conf/icml/NaorR06} considered this problem in the context of an adversary trying to impersonate someone in an authentication protocol).
In the setting of repeated games, the random string used by the opponent's strategy can be thought of as the secret state. Note that learning it at any non-terminal history would give rise to efficient profitable deviation, since the player could just compute the next move of his opponent and play the best response to it.

\paragraph{Learning adaptively changing distributions.}
An adaptively changing distribution is given by a pair of algorithms $\G$ and $\D$ for generating an initial state and sampling. The algorithm $\G$ is a randomized function $\G:R\rightarrow S_p \times S_{init}$ that outputs an initial public state $p_0$ and a secret state $s_0$. The sampling algorithm $\D$ is a randomized function $\D: S_p \times S_s \times R \rightarrow S_p \times S_s$ that at each stage takes the current public and secret states, updates its secret sate and outputs a new public state.
A learning algorithm $\L$ for $(\G,\D)$ is given the initial public state $p_0$ ($\L$ does not get the initial secret state $s_0$) and at each round $i$:
i) $\L$ either outputs prediction of the conditional distribution $D_{i+1}^{s_0}(p_0,\ldots,p_i)$ of the public output of $\D$ given the initial secret $s_0$ and the observed public states $p_0,\ldots,p_i$, or
ii) $\L$ proceeds to round $i+1$ after observing a new public state $p_{i+1}\gets D_{i+1}^{s_0}(p_0,\ldots,p_i)$.
The goal of the learning algorithm is to output a hypothesis (in a form of a distribution) that is with high probability close in statistical distance to $D_{i+1}^{s_0}(p_0,\ldots,p_i)$.
In other words, $\L$ is trying to be competitive to somebody who knows the initial secret state $s_0$.
In the setting where $\G,\D$ are efficiently constructible Naor and Rothblum \cite{DBLP:conf/icml/NaorR06} gave an algorithm $\L$ that learns $s_0$ in probabilistic polynomial time provided that one-way functions do not exist.
Moreover, their algorithm outputs a hypothesis after seeing a number of samples proportional to the entropy of the initial secret state.

\begin{theorem}[Naor and Rothblum~\cite{DBLP:conf/icml/NaorR06}]\label{thm:NR06}
Almost one-way functions exist if and only if there exists an adaptively changing distribution $(\G,\D)$ and polynomials $\eps(n),\delta(\epsilon)$ such that it is hard to $(\delta(n),\epsilon(n))$-learn the adaptively changing distribution $(\G,\D)$ with $O\left(\delta^{-2}(n)\cdot\eps^{-4}(n)\cdot\log|S_{init}|\right)$ samples.
\end{theorem}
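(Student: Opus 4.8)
The plan is to prove the two implications separately; the content is in the direction ``a hard-to-learn ACD implies almost one-way functions,'' whose contrapositive is an algorithm learning every efficiently-constructible ACD. \emph{From an almost one-way function to a hard-to-learn ACD:} starting from an almost one-way $f$, I would build a Blum--Micali-style ACD $(\G,\D)$ in which $\G$ samples a secret seed $x$ and publishes an initial $p_0$ from which $x$ is \emph{information-theoretically} determined (iterating a one-way permutation many times, or an injective length-doubling generator derived from $f$), and $\D$ reveals one hardcore bit of the corresponding backward sequence per round. For such an ACD the true conditional distribution of the next public output given $(p_0,\dots,p_i)$ is a point mass, hence at statistical distance $1/2$ from uniform; thus a learner cannot ``give up'' by outputting the uniform distribution, and any learner that $(\delta(n),\eps(n))$-learns $(\G,\D)$ yields a polynomial-time next-bit predictor with advantage bounded away from $0$, contradicting the hardcore property and hence the (almost) one-wayness of $f$. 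Choosing $\eps,\delta$ to be inverse polynomials makes $O(\delta^{-2}\eps^{-4}\log|S_{init}|)$ polynomial, so the sample bound holds; the care here is ensuring \emph{every} round is simultaneously transcript-determined and efficiently unpredictable, and that the reduction from a general one-way $f$ (not a permutation) goes through by standard transformations.

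\emph{Learnability of all ACDs implies no almost one-way functions:} this is the contrapositive of the other direction and is the heart of the theorem. Assume almost one-way functions do not exist; by the Impagliazzo--Luby equivalence this is the non-existence of distributionally one-way functions, i.e.\ every polynomial-time computable $g$ has a polynomial-time distributional inverter that on $g(x)$, with $x$ uniform, returns an almost-uniform preimage. I would turn this into a learner for an arbitrary efficiently-constructible $(\G,\D)$: the learner maintains a current hypothesized initial secret state $\hat{s}_0$ and predicts the next public distribution as $D_{i+1}^{\hat{s}_0}(p_0,\dots,p_i)$, which it can compute since $\G,\D$ are efficient. It obtains and refreshes $\hat{s}_0$ by running the distributional inverter on the function mapping $\G,\D$'s coins through the first $i$ rounds to the observed transcript together with the induced initial secret state; a random preimage is then, up to small statistical error, a sample from the true posterior on $s_0$ given $(p_0,\dots,p_i)$. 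Before committing to a prediction the learner draws a batch of further public samples, tests whether $\hat{s}_0$ is still consistent with them, and if not re-invokes the inverter on the extended transcript and repeats.

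\emph{Why this terminates, and the main obstacle:} the crux is a ``surprise-counting'' potential argument bounding the number of refreshes by $O(\log|S_{init}|)$. Whenever the current $\hat{s}_0$ predicts a distribution $\eps$-far in $\SD$ from the truth, a single fresh sample already exhibits telltale likelihood-ratio evidence with probability $\ge\eps$, so $O(1/\eps^2)$ samples expose the inconsistency with probability bounded away from $0$; and each exposed inconsistency is an event of probability $\le 1/2$ under the old $\hat{s}_0$, so conditioning on it shrinks the plausible posterior mass by a constant factor, and the true state has at most $\log|S_{init}|$ bits of such mass to lose, so after $O(\log|S_{init}|)$ refreshes the surviving $\hat{s}_0$ makes only $\eps$-good predictions. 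Multiplying $O(\log|S_{init}|)$ refreshes by $O(1/\eps^2)$ samples per test, by an $O(1/\eps^2)$ accuracy overshoot so the survivor is genuinely $\eps$-good rather than merely not-yet-flagged, and by $O(1/\delta^2)$ repetitions to push the overall failure probability (also absorbing the inverter's statistical slack and its merely inverse-polynomial success probability) below $\delta$, yields the claimed $O(\delta^{-2}\eps^{-4}\log|S_{init}|)$ samples. I expect the main obstacle to be making this simultaneously rigorous: one needs (a) genuinely bad hypotheses flagged with probability bounded below, (b) good hypotheses almost never falsely flagged, and (c) each flag provably removing a constant fraction of the relevant posterior mass, all while the only inverter available is an \emph{approximate} posterior-sampler that succeeds only for infinitely many input lengths, forcing the whole argument to be phrased infinitely-often and to keep the inverter's error within the $\eps,\delta$ budget. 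Getting the bookkeeping to land on exactly $\delta^{-2}\eps^{-4}\log|S_{init}|$ is the delicate final step.
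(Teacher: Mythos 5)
This theorem is not proved in the paper at all: it is imported verbatim from Naor and Rothblum~\cite{DBLP:conf/icml/NaorR06} and used as a black box in the proof of \thmref{EfficientAdvantage}, so there is no in-paper proof to compare against. That said, your reconstruction follows the actual Naor--Rothblum argument in both directions: the ``hard ACD from an almost one-way function'' direction does go through pseudorandomness (a transcript-determined but next-bit-unpredictable sequence, so that any $(\delta,\eps)$-learner with $\eps<1/2$ becomes a next-bit predictor), and the converse does rest on the Impagliazzo--Luby equivalence with distributionally one-way functions, using a distributional inverter on the map from $(\G,\D)$'s coins to the public transcript to sample an approximate posterior over $s_0$, with the number of hypothesis refreshes charged against the (at most $\log|S_{init}|$ bits of) information the transcript can reveal about the secret. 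The one place your sketch is materially looser than the actual argument is the potential function: Naor--Rothblum do not count ``constant-factor shrinkage of posterior mass per flagged inconsistency'' but instead bound the expected Kullback--Leibler gain per round --- a hypothesis whose prediction is $\eps$-far in statistical distance from the true conditional distribution forces an expected information gain of order $\eps^2$ about $s_0$ from the next sample, and the total such gain is at most $\log|S_{init}|$ --- which is what actually produces the $\eps^{-2}$ (and, with the amplification and inverter-slack overheads you describe, the $\delta^{-2}\eps^{-4}\log|S_{init}|$) dependence; your likelihood-ratio version as stated does not obviously compose, since a single flagged event of probability $\le 1/2$ under $\hat{s}_0$ need not remove constant posterior mass from the set of still-viable secrets. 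Since the paper only needs the theorem as a citation, none of this affects the paper; as a reconstruction of \cite{DBLP:conf/icml/NaorR06} it is the right skeleton with the quantitative core of the termination argument left to be done properly.
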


The strategy of the column player (Colin) with limited randomness gives rise to a natural adaptively changing distribution and we show that the algorithm of Naor and Rothblum~\cite{DBLP:conf/icml/NaorR06} can be used to construct a computationally efficient strategy for the row player (Rowena) that achieves utility noticeably larger than the value of the stage game.
Hence, if one-way functions do not exist, then in repeated two-player strategic games there are no computational Nash equilibria with strategies that use sub-linear randomness in the number of the stages.

\begin{theorem}\thmlab{EfficientAdvantage}
Let $G=\langle (A_1,A_2), u\rangle$ be a two-player zero-sum strategic game with no weakly dominant pure strategies and with value $v$. If almost one-way functions do not exist then for any strategy $\{C_n\}_{n\in\NN}$ of Colin in the repeated game of $G$ that uses $o(n)$ random bits, there exists a polynomial time strategy of Rowena with expected average utility $v+\delta(n)$ against $\{C_n\}_{n\in\NN}$ for some noticeable function $\delta$.
\end{theorem}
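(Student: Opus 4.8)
The plan is to reduce the task of efficiently exploiting a low‑randomness Colin to \emph{learning an adaptively changing distribution}, and then to invoke the Naor--Rothblum learner (\thmref{NR06}) in its regime where almost one‑way functions are absent. Write $s(n)=o(n)$ for the number of random bits that Colin's circuit $C_n$ consumes in $G^n$, and let $r\in\{0,1\}^{s(n)}$ be his seed; since $C_n(\cdot,r)$ is a deterministic function of the history, Colin's entire play in $G^n$ is determined by the static secret $r$. I form the adaptively changing distribution $(\G,\D)$ whose initial secret state is a uniformly random seed $r$ (so $\log|S_{\mathrm{init}}|\le s(n)$), whose public state after $t$ stages is the transcript $(a^1,\dots,a^t)$, and whose sampler $\D$ at stage $t{+}1$ runs Rowena's (to‑be‑specified) strategy, lets Colin respond by $C_n(h^t,r)$, and appends the resulting action profile. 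This pair is efficiently constructible, so by \thmref{NR06}, since almost one‑way functions do not exist, for any noticeable accuracy/confidence parameters there is a probabilistic polynomial‑time learner $\L$ that learns $(\G,\D)$ after observing only $O(\mathrm{poly}(\cdot)\cdot\log|S_{\mathrm{init}}|)$ public samples; fixing the parameters to suitable constants, the sample count is $T=O(s(n))=o(n)$.

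Rowena's strategy is then: for the first (at most) $T$ stages she plays a uniformly random action while feeding the transcript to $\L$; once $\L$ commits to a hypothesis, in every subsequent stage she uses the hypothesis to predict Colin's next action at the current history and plays a stage‑game best response to it. Two facts make this profitable. First, because Colin is deterministic given his seed, the true conditional distribution of his next action is a point mass, so an $\eps$‑accurate hypothesis with $\eps<1/2$ pins down Colin's action exactly, and then the best response earns Rowena $\max_{a_1}u(a_1,a_2^t)$ at stage $t$. Second, the hypothesis that $G$ has no weakly dominant pure strategies is used in the form that there is a constant $\gamma=\gamma(G)>0$ with $\max_{a_1}u(a_1,a_2)\ge v+\gamma$ for every pure action $a_2$ of Colin (without such a $\gamma$, Colin could hold Rowena to $v$ using no randomness at all, so this is exactly where the structural assumption on $G$ enters). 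Hence each correctly predicted stage in the second phase nets Rowena at least $v+\gamma$.

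For the accounting, let $M$ be a bound on $|u|$ and let $\rho$ be the (constant) confidence parameter, so with probability at least $1-\rho$ the learner commits successfully and all its later predictions are $\eps$‑accurate. In that event Rowena collects at least $v+\gamma$ in each of the $n-T$ second‑phase stages and at least $-M$ in each first‑phase stage; in the complementary event she collects at least $-M$ throughout. Her expected average utility is therefore at least
\[
(1-\rho)\Bigl[\tfrac{n-T}{n}(v+\gamma)-\tfrac{T}{n}M\Bigr]-\rho M
= v+\gamma-\rho\,(v+\gamma+M)-o(1),
\]
and choosing $\rho$ a small enough constant makes this at least $v+\gamma/2$ for all large $n$; thus $\delta(n):=\gamma/2$ is a constant, hence noticeable, advantage function that works. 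The whole strategy is polynomial time: $\L$ is polynomial time because almost one‑way functions are absent, evaluating $C_n$ on a guessed seed or on a history is polynomial time, and a stage‑game best response in the fixed game $G$ is computed in constant time.

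The step I expect to be the main obstacle is reconciling what the learner \emph{observes} with what it must \emph{predict}: the samples in the first phase come from Rowena playing uniformly, but the hypothesis is used in the second phase along histories generated by Rowena's best responses, which follow a different process, so a naive argument would only guarantee accurate predictions on the first‑phase distribution. This is precisely why $(\G,\D)$ above is defined with $\D$ simulating Rowena's \emph{entire} strategy (uniform exploration followed by learner‑driven best responses) rather than a fixed exploration rule: with that definition the $\eps$‑accuracy guarantee of \thmref{NR06} holds along the actual play path, so the second‑phase predictions are accurate on exactly the histories Rowena reaches. Making this mildly self‑referential construction precise, and checking that the sample bound of \thmref{NR06} still yields an $o(n)$ first phase when $\D$ internally runs the learner, is where most of the technical work lies; the remaining game‑theoretic content — the uniform positive margin $\gamma$ against every pure action of Colin — is the part that uses the hypothesis on $G$.
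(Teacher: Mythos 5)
There is a genuine gap, and it sits exactly where you flagged the ``main obstacle.'' Theorem~\ref{thm:NR06} gives a \emph{one-round} guarantee: after $k$ samples the learner outputs a hypothesis $h$ such that $D_{k+1}^{h}$ is $\eps$-close to $D_{k+1}^{s_0}$ --- i.e., the prediction is accurate for the single next public state of the \emph{specific sampling process} $(\G,\D)$ that was observed. It does not guarantee that $h$ identifies Colin's seed, nor that predictions derived from $h$ remain accurate at all subsequent histories, and once Rowena switches to best responses the play path leaves the distribution that $\D$ generates, so even the one-round guarantee no longer applies at later stages. Your proposed fix --- defining $\D$ to internally simulate Rowena's entire strategy including the learner-driven best responses --- is circular ($\L$ must learn a $\D$ that already contains $\L$) and in any case would still only buy you accuracy at round $k+1$, not at rounds $k+2,\dots,n$. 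Consequently the claimed per-stage gain of $\gamma$ over $n-T$ stages, and hence the constant advantage $\delta(n)=\gamma/2$, is unsupported. This is not a repairable detail: the paper's own proof exploits Colin at exactly \emph{one} stage (the round immediately after $\L$ commits), plays the minmax strategy $\mu$ everywhere else to secure $v$ per stage, and obtains only $\delta(n)=\Theta(1/n)$; and \secref{StrongExploitation} explicitly lists the constant-advantage version you are claiming as an open problem, even for matching pennies.

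A secondary but consequential error: you have Rowena explore \emph{uniformly at random} for $T=O(s(n))$ stages and bound her stage payoff there only by $-M$. Against an adversarially chosen low-randomness Colin, uniform play can earn strictly less than $v$, so the first phase can cost $\Theta(s(n)/n)$ relative to $v$ on average. In your accounting this is absorbed by the (unjustified) constant gain; once the gain drops to the achievable $\Theta(1/n)$, a deficit of $\Theta(s(n)/n)$ swamps it whenever $s(n)=\omega(1)$. The paper avoids this by making $\D$ sample Rowena's actions from her minmax strategy, so that during the learning phase she both feeds $\L$ samples from the correct process and secures expectation $v$ in every stage. Your treatment of the structural hypothesis (no weakly dominant pure strategies yielding a uniform margin $\gamma>0$ for the best response to any pure action of Colin) matches the paper's and is fine.
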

\begin{proof}
Let $\{C_n\}_{n\in\NN}$ be an arbitrary strategy of Colin that takes $s(n)\in o(n)$ random bits.
Let $\mu$ be the minmax strategy of Rowena in $G$. We define the following adaptively changing distribution $(\G,\D)$.
The generating algorithm $\G$ on input $1^n$ outputs a random string of length $s(n)$ as the initial secret state $s_0$ and the initial history $\emptyset$ of the $n$-stage repeated game of $G$ as the initial public state $p_0$.
The sampling algorithm $\D$ outputs the new secret state $s_{i+1}$ identical to the secret state $s_i$ that it received as an input (i.e., the secret state remains fixed as the $s(n)$ random coins $s_0$) and updates the input public state $p_i$ in the following way.
The sampling algorithm parses $p_i$ as a history of length $i$ in the $n$-stage repeated game of $G$ and computes Colin's action $c_i=C_n(p_i,s_i)$ at $p_i$ using randomness $s_i$.
$\D$ additionally samples Rowena's action $r_i\leftarrow\mu$ according to her minmax strategy and then outputs the history $(p_i,(r_i,c_i))$ of length $i+1$ as the new public state $p_{i+1}$.
Note that after sampling the initial secret state $s_0$ the only randomness used by $\D$ is to sample the minmax strategy of Rowena.

It follows from Theorem~\ref{thm:NR06} that there exists an efficient learning algorithm $\L$ that after at most $k=k(n)\in O(s(n)\cdot\delta^{-2}(n)\epsilon^{-4}(n))$ samples from $\D$ outputs a hypothesis $h$ such that
$
\Pr[\SD(D_{k+1}^{s_0},D_{k+1}^{h})\leq \epsilon(n)]\geq 1-\delta(n).
$
Consider the strategy of Rowena that uses $\L$ in order to learn Colin's random coins. In particular, a strategy that at each stage $i$ runs $\L$ on the current history $p_{i-1}$ and if $\L$ outputs some hypothesis $h$ then the strategy plays the best response to Colin's action at stage $i$ sampled according to $D_{i+1}^{h}$; and otherwise it plays according to Rowena's minmax strategy $\mu$.
This strategy can be efficiently implemented and it achieves expectation at least $v$ in the $n-1$ stages in which Rowena plays according to her minmax strategy.\footnote{Note that if $\L$ does not output a hypothesis at the current stage, then Rowena chooses her action according to the same distribution as in $\D$, her minmax strategy, and her expectation is $v$.}
It remains to show that Rowena has a noticeable advantage over the value of the game at the stage in which $\L$ outputs the hypothesis $h$ about $s_0$ and Rowena selects her strategy as the best response to Colin's action sampled according to $D_{k+1}^{h}$.

First, note that since $G$ has no weakly dominant strategies, the best response to any pure action $a_2$ of Colin achieves a positive advantage over the value of the game. This observation follows from the fact that Rowena's minmax strategy achieves expectation at least $v$ against any action of Colin and from the fact that the minmax strategy must be mixed (as there are no weakly dominant strategies). By moving all the probability in the minmax strategy to the action with highest payoff given that Colin plays $a_2$, Rowena achieves a value strictly larger than $v$.
Hence, there exists some constant $e$ (depending only on $G$) such that if $D_{k+1}^{h}$ is $e$-close in statistical distance to $D_{k+1}^{s_0}$ then the expectation of the best response against $D_{k+1}^{h}$ achieves expectation at least $v+c$ for some constant $c>0$. Moreover, it is good enough if $\L$ outputs such $h$ with probability at least $1-\delta$ for some constant $\delta>0$. Since $\epsilon$ and $\delta$ can be constant, for all large enough $n$ the learning algorithm $\L$ outputs the hypothesis after receiving at most $k<n$ samples which allows Rowena to get expectation at least $v+\frac{1}{n}c$.
\end{proof}

It follows from \thmref{EfficientAdvantage} that if one-way functions do not exist, then there is no computational Nash equilibrium of repeated two-player zero-sum games where one of the players uses random strings of length sub-linear in the number of stages.

\begin{corollary}\corlab{NoComputationalNash}
Let $G=\langle (A_1,A_2), u\rangle$ be a two-player zero-sum strategic game with no weakly dominant pure strategies and with value $v$. If almost one-way functions do not exist then there is no computational Nash equilibrium of the repeated game of $G$ in which strategy of one of the players uses $o(n)$ random bits.
\end{corollary}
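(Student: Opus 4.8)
\section*{Proof proposal for \corref{NoComputationalNash}}

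The plan is to deduce the corollary from \thmref{EfficientAdvantage} by a short contradiction argument, using in addition the elementary fact that in a zero-sum game each player can \emph{efficiently} secure (essentially) the value of the game by playing a fixed stage-game minmax strategy in every round. Suppose toward a contradiction that $(\{C^1_n\}_{n\in\NN},\{C^2_n\}_{n\in\NN})$ is a computational Nash equilibrium of the repeated game of $G$ in which the strategy of one of the players uses $o(n)$ random bits. Since the class of two-player zero-sum games with no weakly dominant pure strategies is closed under transposing the payoff matrix (which negates the value), by passing if necessary to the transposed game $\langle (A_2,A_1),-u\rangle$ and relabelling the players, I may assume without loss of generality that it is the column player whose strategy $\{C^2_n\}$ uses $o(n)$ random bits.

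Now I would invoke \thmref{EfficientAdvantage}: because almost one-way functions do not exist and $\{C^2_n\}$ uses $o(n)$ random bits, there is a polynomial-time strategy $\{\tilde C^1_n\}$ for Rowena with $\E[u^{*}_n(\tilde C^1_n,C^2_n)]\ge v+\delta(n)$ for some noticeable $\delta$. Applying the computational equilibrium condition for Rowena to this efficient deviation gives $\E[u^{*}_n(C^1_n,C^2_n)]\ge v+\delta(n)-\eps(n)$ for a negligible $\eps$. For the reverse bound, Colin has a trivial efficient deviation: play a fixed stage-game minmax strategy of the column player of $G$ in every round, which is realised by a polynomial-size circuit using $O(n)$ random bits. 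By the minimax theorem this deviation gives the row player average payoff at most $v$, i.e.\ it gives Colin average payoff at least $-v$; so the computational equilibrium condition for Colin yields $\E[u^{*}_n(C^1_n,C^2_n)]\le v+\eps(n)$. Combining the two estimates gives $v+\delta(n)-\eps(n)\le v+\eps(n)$, hence $\delta(n)\le 2\eps(n)$ for all sufficiently large $n$, contradicting that $\delta$ is noticeable while $\eps$ is negligible.

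As is typical for such arguments, the only real care required is bookkeeping rather than mathematics. One must check that both deviations that enter the argument --- Rowena's learning-based strategy coming out of \thmref{EfficientAdvantage} and Colin's repeated-minmax strategy --- are genuinely implementable by polynomial-size circuit families with the stated number of random bits, as the definition of computational Nash equilibrium demands; and one must make sure that the ``for all large enough $n$'' quantifier in that definition lines up with the range of input lengths on which \thmref{EfficientAdvantage} (and hence the hypothesis that almost one-way functions do not exist) actually delivers the advantage $\delta(n)$. I also want to confirm, for the symmetric case, that the property ``no weakly dominant pure strategies'' --- which is what lets us invoke \thmref{EfficientAdvantage} --- is preserved under transposition, which it plainly is. I expect no genuine obstacle beyond verifying these routine points.
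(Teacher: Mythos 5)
Your proof is correct and takes essentially the same approach as the paper: invoke \thmref{EfficientAdvantage} to get Rowena's profitable efficient deviation, and use the repeated stage-game minmax strategy as Colin's efficient deviation to bound the equilibrium payoff from the other side. The only difference is organizational --- the paper splits into two cases according to whether the equilibrium payoff $w(n)$ is negligibly or noticeably above $v$, whereas you combine both inequalities into a single chain, which is arguably cleaner since it sidesteps the fact that a function need not be either negligible or noticeable.
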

\begin{proof}Assume that there exists a computational Nash equilibrium $(\{C_{n}^{1}\}_{n\in\NN},\{C_{n}^{2}\}_{n\in\NN})$ of $\{G^n\}_{n\in\NN}$, the repeated game of $G$, in which the strategy of one of the players uses random strings of length $o(n)$. Without loss of generality, let Colin be the player with strategy that uses sub-linear randomness in the number of stages.

Denote by $w(n)$ the expectation of Rowena in this computational Nash equilibrium, i.e., for all $n\in\NN$, $w(n)=\E[u_{n}^{*}(C_{n}^{1},C_{n}^{2})]$.
First, consider the case when $w(n)\leq v+\eta(n)$ for some negligible function $\eta$.
By \thmref{EfficientAdvantage} there exists a polynomial-time strategy of Rowena that achieves expectation $v+\delta(n)$ against $\{C_{n}^{2}\}_{n\in\NN}$ for some noticeable function $\delta$. Thus, this strategy constitutes Rowena's computationally efficient deviation to the above strategy profile that is profitable by some non-negligible amount.
Second, consider the case when $w(n)=v+\delta(n)$ for some noticeable function $\delta$. Colin can efficiently approximate the strategy that at each stage achieves his minmax payoff profile in the stage game to achieve expected payoff in the repeated game at least $-v-\eta(n)$, where $\eta$ is a negligible function. Such strategy constitutes Colin's computationally efficient deviation that achieves non-negligible advantage over the above utility profile. In both cases, $(\{C_{n}^{1}\}_{n\in\NN},\{C_{n}^{2}\}_{n\in\NN})$ is not a computational Nash equilibrium of the repeated game of $G$.
\end{proof}

%---------------- End of Low Entropy Computational Nash -----------------%

%--------------- End of Nash Equilibria with Low Entropy ----------------%

%------------------------ Strong Exploitation ---------------------------%
\section{Strong Exploitation of Low-Entropy Opponents}\seclab{StrongExploitation}
We showed in the previous sections that equilibrium strategies in repeated two-player zero-sum games (both with or without restrictions on the computational power of the players) require entropy at least linear in the number of stages.
A natural approach for enabling equilibria that require lower amount of randomness might be to relax the solution concept and consider $\eps$-Nash equilibria, i.e., to ask what is the amount of randomness necessary for equilibrium strategies when the players are indifferent to improvements in utility smaller than $\eps$.

As can be seen from the following argument, an equivalent question is how much can a player exploit an opponent that uses a strategy of low-entropy.
Let $\alpha$ be an entropy level such that Rowena can exploit any Colin's strategy of entropy below $\alpha$ by more than $\epsilon$ (i.e., she can achieve expected utility in the repeated game improving by at least $\eps$ over the value of the stage game).
Then in any $\eps$-Nash equilibrium of the repeated game the strategy of the column player must be of entropy at least $\alpha$.

%------------------- Exploitation with Unbounded Strategies -------------------%
\subsection{Computationally Unbounded Players}
The performance of strategies with bounded entropy in repeated two-player zero-sum games was previously studied in the standard setting with players that do not face any computational limitations.
Towards this direction, Neyman and Okada~\cite{neyman1999strategic} introduced a notion of \emph{strategic entropy} in the context of repeated two-player zero-sum games in order to analyze repeated games played by bounded automata or players with bounded recall.
Subsequently, \cite{DBLP:journals/geb/NeymanO00} gave an asymptotic characterization of the value of repeated two-player zero-sum games when one of the players is restricted to strategies of bounded strategic entropy.
In particular, they showed that if the row player can use strategies of strategic entropy at most $\gamma n$, then in the $n$-stage game she can guarantee expected average utility at most $(\cav U)(\gamma)$; where $U(\gamma)$ is the maximal expected utility the row player can guarantee in the stage game by a strategy of entropy at most $\gamma$, and $\cav U$ is the concavification of $U$ (i.e., the smallest concave function larger or equal to $U$ for all $\gamma\geq 0$).

\paragraph{Repeated matching pennies.}For the special case of the repeated game of matching pennies (given in Figure~\ref{fig:MP}), Budinich and Fortnow~\cite{DBLP:conf/sigecom/BudinichF11} noticed a smooth tradeoff between the amount of entropy available to players and the necessary relaxation of the Nash equilibrium solution concept.
In particular, they showed that in any $\eps$-Nash equilibrium of the $n$-stage repeated game of matching pennies the players must use strategies of entropy at least $(1-\eps)n$ (for all $0\leq\eps\leq 1$).
Their result follows by observing that in the $n$-stage game of matching pennies for all $0\leq\eps\leq 1$, the best response of the column player to any strategy of the row player of entropy at most $(1-\epsilon)n$ achieves expected utility at least $\epsilon$.
This observation can be derived from the result of Neyman and Okada~\cite{DBLP:journals/geb/NeymanO00} by noticing that in the one-shot game of matching pennies $(\cav U)(1-\eps)=-\eps$.
Hence, in the $n$-stage game of matching pennies the row player can guarantee for herself average expected utility at most $(\cav U)(1-\eps)=-\eps$ by a strategy of entropy at most $(1-\eps)n$, and equivalently the column player can achieve expectation at least $\eps$.

In fact, the result of Neyman and Okada~\cite{DBLP:journals/geb/NeymanO00} implies that the relation between $\eps$-Nash equilibria and the entropy of the players' strategies can be extended to all repeated two-player zero-sum games.
\begin{theorem}\thmlab{GeneralAdvantage}
Let $G=\langle (A_1,A_2), u\rangle$ be a two-player zero-sum strategic game of value $v$ and let $\beta>0$ denote the minimal entropy of a minmax strategy for the column player in $G$. For any $0<\eps\leq 1$, there exists $c>0$ such that if $\sigma$ is a strategy of the column player of entropy $(1-\eps)\beta n$ in the $n$-stage repeated game of $G$ then the row player has a deterministic strategy that achieves average payoff of at least $v+c$ against $\sigma$.
\end{theorem}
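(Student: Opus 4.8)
The plan is to transpose everything to the column player's point of view and then invoke (the easy, non‑asymptotic direction of) the Neyman--Okada entropy--value characterization cited above. For a mixed strategy $q\in\Delta(A_2)$ of the column player in the one‑shot game $G$, write $g(q)=\min_{p\in\Delta(A_1)}\E[-u(p,q)]$ for the payoff that $q$ guarantees the column player, and set
\[
U_C(x)\ :=\ \max_{q\in\Delta(A_2):\,H(q)\le x} g(q),\qquad x\in[0,\log|A_2|].
\]
This is precisely the ``$U$'' function of Neyman and Okada, now for the column player in $G$. By the minmax theorem $g(q)\le -v$ for every $q$, so $U_C\le -v$ everywhere; $U_C$ is non‑decreasing (the feasible set grows with $x$) and continuous on the compact interval $[0,\log|A_2|]$ (the entropy ball $\{H(q)\le x\}$ varies continuously in $x$ and $g$ is continuous). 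The single‑shot fact I would extract is: $U_C(x)=-v$ \emph{iff} $x\ge\beta$. Indeed $U_C(x)=-v$ means some $q$ with $H(q)\le x$ attains $g(q)=-v$, i.e.\ $q$ is a minmax strategy for the column player; since the set of such strategies is a nonempty closed polytope on which $H$ is continuous, the minimal entropy $\beta$ over it is attained, so $U_C(x)=-v\iff\beta\le x$, and $U_C(x)<-v$ strictly for $x<\beta$.

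Next I would establish the strict separation $(\cav U_C)\big((1-\eps)\beta\big)<-v$, which is the only real content of the proof. By Carath\'eodory in one dimension, together with compactness of the domain and continuity of $U_C$, the value $(\cav U_C)(x_0)$ is attained as $\lambda\,U_C(x_1)+(1-\lambda)\,U_C(x_2)$ for some $x_1,x_2\in[0,\log|A_2|]$ and $\lambda\in[0,1]$ with $\lambda x_1+(1-\lambda)x_2=x_0$. If this value equals $-v$, then since $U_C(x_1),U_C(x_2)\le -v$, every point carrying positive weight must have $U_C$‑value equal to $-v$, hence lie in $[\beta,\log|A_2|]$ by the previous paragraph, forcing $x_0\ge\beta$. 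Contrapositively, $x_0<\beta$ implies $(\cav U_C)(x_0)<-v$. Applying this at $x_0=(1-\eps)\beta<\beta$ yields the claim, and I set $c:=-v-(\cav U_C)\big((1-\eps)\beta\big)>0$.

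Finally I would invoke the Neyman--Okada characterization in the form applied to the column player in $G$: a column strategy in $G^n$ of strategic entropy at most $(1-\eps)\beta n$ cannot guarantee the column player an average payoff exceeding $(\cav U_C)\big((1-\eps)\beta\big)$; equivalently, against \emph{any} such column strategy $\sigma$ the row player has a strategy with expected average payoff at least $-(\cav U_C)\big((1-\eps)\beta\big)=v+c$. Because $\sigma$ is a fixed opponent strategy, this best response can be taken deterministic — at each history pick the action maximizing the (now completely determined) continuation value against $\sigma$ — which gives the deterministic exploiting strategy asserted in the theorem. The one routine point is that the worst‑case entropy notion of \defref{EntropyOfStrategy} (maximum over terminal histories of the sum of per‑stage entropies) is exactly the notion for which this direction of the bound holds: a per‑stage best response to the column player's conditional distribution limits him to $U_C(x_t)$ at a stage whose conditional entropy is $x_t$, and concavity of $\cav U_C$ with $\sum_t x_t\le(1-\eps)\beta n$ yields $\tfrac1n\sum_t U_C(x_t)\le(\cav U_C)\big((1-\eps)\beta\big)$. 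I expect the main obstacle to be the strict separation step together with carefully transposing the cited theorem and reconciling the entropy conventions; once these are in place the theorem is immediate.
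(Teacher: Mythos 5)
Your proposal is correct, but it follows a genuinely different route from the paper's own proof. The paper first isolates a one-shot lemma (\lemmref{LowEntropyOneShot}): any column strategy of entropy at most $(1-\eps)\beta$ in the stage game can be exploited by a \emph{uniform} constant $c_\eps>0$, the uniformity resting on the same compactness of the entropy ball that you use for attainment. It then runs a pigeonhole argument along each terminal history: since the per-stage conditional entropies sum to at most $(1-\eps)\beta n$, at least a $\bigl(1-\tfrac{1-\eps}{1-\eps/2}\bigr)$-fraction of stages have conditional entropy at most $(1-\tfrac{\eps}{2})\beta$, and the myopic best response gains $c_{\eps/2}$ at each of those stages and at least $v$ elsewhere. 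You replace the pigeonhole step by the Neyman--Okada concavification: a per-stage gain of at least $-U_C(x_t)$ followed by Jensen applied to $\cav U_C$, together with your strict-separation argument showing $(\cav U_C)(x_0)<-v$ for $x_0<\beta$. The two proofs share the same skeleton --- decompose the average payoff into per-stage conditional expectations along terminal histories, which is exactly where the worst-case entropy notion of \defref{EntropyOfStrategy} enters --- and both hinge on the same single-shot fact that entropy strictly below $\beta$ forces a strict loss for the column player, made uniform by compactness. What your version buys is the sharper constant $c=-v-(\cav U_C)\bigl((1-\eps)\beta\bigr)$, which is precisely the improved bound the paper states without proof in the remark following the theorem; the paper's version is more elementary and self-contained. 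Two small points to tidy up: you only need upper semicontinuity of $U_C$ (which follows from closedness of $\{(q,x):H(q)\le x\}$ and compactness of the simplex) rather than full continuity for the attainment of $\cav U_C$; and the stage-by-stage accounting requires the \emph{myopic} best response to $\sigma(h)$ at each history (which is already deterministic and is the paper's $\rho_\sigma$), with the globally optimal deterministic best response doing at least as well.
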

For completeness we provide the proof of \thmref{GeneralAdvantage} in Appendix~\ref{sec:GeneralAdvantage}.

\paragraph{Limits on exploiting a low-entropy opponent in non-zero-sum games.}
In repeated non-zero-sum games, unlike in repeated two-player zero-sum games, it is in general not possible for a player to always achieve utility strictly above his minmax level given that his opponent uses low-entropy strategy.
We illustrate this phenomenon on the game $G$ given by the payoff matrix in Figure~\ref{fig:ExtendedMatchingPennies} that we discussed in Section~\ref{sec:unbounded}.
Note that if Colin plays his pure action ``left'' then Rowena gets utility $0$, her minmax payoff, irrespective of her strategy. Even though Colin needs at least one random bit to play his equilibrium strategy in $G$, Rowena cannot benefit from the imperfect play of her opponent at all.
Note that this limitation occurs even if any strategy of Colin in a Nash equilibrium of the repeated game of $G$ must use randomness linear in the number of stages.

%--------------- End of Exploitation with Unbounded Strategies ----------------%

%------------------- Exploitation with Efficient Strategies -------------------%
\subsection{Computationally Efficient Players}

Our results from \secref{ComputationalGeneral} (i.e., \thmref{EfficientAdvantage}) show that if one-way functions do not exist, then it is possible to efficiently gain a noticeable advantage over an opponent that uses randomness sub-linear in the number of the stages.
We find it as an intriguing open problem to show a stronger version of \thmref{EfficientAdvantage} analogous to know results in the setting with computationally unbounded players (i.e., \thmref{GeneralAdvantage}). In particular, to show that it is possible to efficiently gain \emph{a constant advantage} over an opponent that uses randomness sub-linear in the number of the stages (even for the special case of the repeated game of matching pennies).
%--------------- End of Exploitation with Efficient Strategies ----------------%

%------------------------ End of Strong Exploitation --------------------------%

%------------------------------ BIBLIOGRAPHY ----------------------------------%
%\newpage
%\printbibliography

%------------------------------ APPENDIX --------------------------------------%
%\newpage
\appendix
%------------------------------ Matching Pennies ------------------------------%
\section{Exploiting Low Entropy in Two-Player Zero-Sum Games}\label{sec:GeneralAdvantage}
In this appendix we provide the proof of \thmref{GeneralAdvantage} that establishes that if one player uses a constant fraction less randomness in the repeated two-player zero-sum game, then the other player can obtain an average payoff that is larger than the value of the stage game by a constant.

We use the following lemma about performance of low-entropy strategies in two-player zero-sum games in the proof of \thmref{GeneralAdvantage}.
\begin{lemma}\lemmlab{LowEntropyOneShot}
Let $G=\langle (A_1,A_2), u\rangle$ be a two-player zero-sum strategic game of value $v$ and let $\beta>0$ denote the minimal entropy of a minmax strategy for the column player in $G$.
For every $\eps>0$, there exists $c_\eps>0$ such that if $\sigma$ is a strategy of the column player of entropy $(1-\eps)\beta$ then the row player has a strategy that achieves utility at least $v+c_\eps$ against $\sigma$.
\end{lemma}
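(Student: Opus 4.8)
The plan is to argue by a continuity/compactness argument on the space of column strategies. First, observe that the set of minmax strategies of the column player (those $\sigma_2$ with $\max_{\sigma_1} \E[u(\sigma_1,\sigma_2)] = v$) is a nonempty closed convex polytope $\M \subseteq \Delta(A_2)$, and by hypothesis every element of $\M$ has Shannon entropy at least $\beta$. For a column strategy $\sigma$, write $R(\sigma) := \max_{\sigma_1 \in S_1} \E[u(\sigma_1,\sigma)]$ for the row player's best-response value against $\sigma$; note $R(\sigma) \geq v$ always, with equality exactly on $\M$. The function $R$ is convex (a max of linear functions) and continuous on the compact set $\Delta(A_2)$, and $H$ is continuous and concave on $\Delta(A_2)$.

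The key step is: I would show $c_\eps := v + \min\{\,R(\sigma) : \sigma \in \Delta(A_2),\ H(\sigma) \leq (1-\eps)\beta\,\} - v > 0$, i.e. that the minimum of $R$ over the (closed, hence compact) set $\{\sigma : H(\sigma) \leq (1-\eps)\beta\}$ is strictly larger than $v$. Suppose not; then by compactness there is a column strategy $\sigma^*$ with $H(\sigma^*) \leq (1-\eps)\beta$ and $R(\sigma^*) = v$, so $\sigma^* \in \M$. But then $H(\sigma^*) \geq \beta > (1-\eps)\beta$, a contradiction. (I should also handle the trivial degenerate case: since the game has value $v$ and a minmax strategy of positive entropy $\beta$, the set of entropy-$\leq (1-\eps)\beta$ strategies is nonempty — it contains every pure action — so the minimum is over a nonempty compact set.) Having established this $c_\eps > 0$, the row player's strategy against a given column strategy $\sigma$ of entropy exactly $(1-\eps)\beta$ (hence $H(\sigma) \leq (1-\eps)\beta$) is simply her best response in $G$, which by definition achieves $\E[u(\cdot,\sigma)] = R(\sigma) \geq v + c_\eps$.

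The main obstacle is ensuring the argument is not vacuous and that the compactness is applied to the right set: one must check that $\{\sigma : H(\sigma) \leq (1-\eps)\beta\}$ is closed (it is, since $H$ is continuous) and nonempty, and that "the minimal entropy of a minmax strategy is $\beta$" is really used as "\emph{every} minmax strategy has entropy $\geq \beta$" — which is exactly what "minimal entropy equals $\beta$" means, so the contradiction goes through. A minor point worth stating explicitly is that $c_\eps$ depends only on $G$ and $\eps$ (through the optimization above), not on the particular $\sigma$, which is what the lemma asserts. No deep machinery is needed here; the substance of Theorem \ref{thm:GeneralAdvantage} will come from lifting this single-shot gap to the $n$-stage game via the Neyman–Okada concavification machinery, but the lemma itself is just this compactness observation.
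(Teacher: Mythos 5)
Your proof is correct and follows essentially the same route as the paper's: both reduce to the observation that a column strategy against which the row player's best response earns only $v$ must itself be a minmax strategy of the column player (the paper phrases this as ``$(\hat{\rho},\sigma)$ would be a Nash equilibrium''), and hence must have entropy at least $\beta > (1-\eps)\beta$, a contradiction. The only difference is one of care rather than substance: you obtain a uniform $c_\eps$ by an explicit compactness/continuity argument over $\{\sigma : H(\sigma)\leq(1-\eps)\beta\}$, whereas the paper takes an infimum over the achieved gaps and leaves its positivity implicit --- your version is slightly tighter on that point (modulo the typo $c_\eps := v + \min\{\cdots\} - v$, which should read $c_\eps := \min\{\cdots\} - v$).
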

\begin{proof}
Let $\sigma$ be an arbitrary strategy of Colin in $G$ of entropy $(1-\epsilon)\cdot\beta$ for some $\eps>0$,
and let $\rho_\sigma$ denote the best response strategy of Rowena to $\sigma$.
First, we show that Rowena's expected utility $\E[u(\rho_{\sigma},\sigma)]$ is at least $v+c$ for some $c>0$.
Suppose to the contrary that Rowena's best response to $\sigma$ achieves expectation at most $v$.
Let $\hat{\rho}$ be the minmax strategy of Rowena in $G$, the profile $(\hat{\rho},\sigma)$ is a Nash equilibrium of $G$:
Rowena's minmax strategy guarantees at least the value of the game $v$. On the other hand, by the hypothesis her best response to $\sigma$ achieves at most $v$, so Rowena's expectation in $(\hat{\rho},\sigma)$ is equal to $v$.
There are no profitable deviations for Colin, since he cannot decrease Rowena's expectation below $v$ given that she plays according to her minmax strategy.
The strategy $\sigma$ of Colin is of entropy $(1-\eps)\cdot\beta<\beta$, and the strategy profile $(\hat{\rho},\sigma)$ is a Nash equilibrium of $G$ contradicting that $\beta$ is the minimal entropy of Colin's strategy in any Nash equilibrium of $G$. Hence, the best response to $\sigma$ must increase Rowena's expectation by a non-zero amount over $v$. The statement of the lemma follows by setting $c_\eps$ to be the infimum of the set of all $c$ achieved against Colin's strategies of entropy $(1-\eps)\cdot\beta$.
\end{proof}

\paragraph{\thmref{GeneralAdvantage}.}
\emph{Let $G=\langle (A_1,A_2), u\rangle$ be a two-player zero-sum strategic game of value $v$ and let $\beta>0$ denote the minimal entropy of a minmax strategy for the column player in $G$. For any $0<\eps\leq 1$, there exists $c>0$ such that if $\sigma$ is a strategy of the column player of entropy $(1-\eps)\beta n$ in the $n$-stage repeated game of $G$ then the row player has a deterministic strategy that achieves average payoff of at least $v+c$ against $\sigma$.}
\begin{proof}
Let $\sigma$ be an arbitrary strategy of the column player (Colin) of Shannon entropy $n\cdot\beta(1-\eps)$ for some $\eps\in [0,1]$.
Let $\rho_\sigma$ be the strategy of the row player (Rowena) that at each non-terminal history $a$ plays the best response in $G$ to Colin's strategy $\sigma(a)$.
Rowena's expectation $\E_{a\gets(\rho_\sigma,\sigma)}[u^{*}(a)]$ is
\[
%\E_{a\gets(\rho_\sigma,\sigma)}[u^{*}(a)]
%=
\frac{1}{n}\left(
\E_{a\gets(\rho_\sigma,\sigma)}[u(a^1)]+ \E_{a\gets(\rho_\sigma,\sigma)}[u(a^2)|a^1] + \cdots
	+ \E_{a\gets(\rho_\sigma,\sigma)}[u(a^n)|(a^1,\ldots , a^{n-1})]\right)\ .
\]
By the definition of conditional expectation, we rewrite her expectation as a summation over all terminal histories, i.e.,
\begin{multline}\label{eqn:sums}
%\E_{a\gets(\rho_\sigma,\sigma)}[u^{*}(a)]
% =
\frac{1}{n}\Bigg(\sum_{b\in A^{n}}
(\rho_\sigma,\sigma)(b)\cdot\bigg(\E_{a\gets(\rho_\sigma,\sigma)}[u(a^1)]+ \E_{a\gets(\rho_\sigma,\sigma)}[u(a^2)|a^1=b^1] + \cdots\\
	+ \E_{a\gets(\rho_\sigma,\sigma)}[u(a^n)|(a^1,\ldots , a^{n-1})=(b^1,\ldots , b^{n-1})]\bigg)\Bigg)\ .
\end{multline}

Note that for every terminal history $b\in A^{n}$ the summands correspond to the expectation of Rowena at the non-terminal subhistories of $b$.
For any terminal history $b\in A^n$, the total sum of entropy used in $\sigma$ at the subhistories of $b$ is at most $(1 - \eps)\beta n$,
which implies that there are at least $n'= n \left(1- (1-\eps)/(1-\frac{\eps}{2})\right)$ subhistories of $b$ where the Colin's strategy has entropy at most $(1 - \frac{\eps}{2})\beta$.
To see this assume that there exists a terminal history $b$ with less than $n'$ subhistories where $\sigma$ uses entropy at most $(1 - \frac{\eps}{2})\beta$.
Then the total entropy of $\sigma$ on all subhistories of $b$ is strictly larger than
\[
(n-n')\left(1 - \frac{\eps}{2}\right)\beta
=
\left(n-n \left(1- \frac{(1-\eps)}{(1-\frac{\eps}{2})}\right)\right)\left(1 - \frac{\eps}{2}\right)\beta
=
(1 - \eps)\beta n\ ,
\]
a contradiction.
As shown in \lemmref{LowEntropyOneShot}, for each subhistory of $b$ where Colin uses strategy of entropy at most $(1-\frac{\eps}{2})\beta$, Rowena's best response achieves at least $v + c$, where $c=c_{\eps/2}>0$ is a value determined by the game $G$ (and a function of epsilon).
On all other subhistories of $b$ (with Colin's strategy of entropy larger than $(1- \frac{\eps}{2})\beta$) the value of Rowena is at least $v$.
Therefore the total utility (the sum of the expectations over all subhistories of $b$) is at least  $n v + c\cdot n' = n(v+c')$, where $c'=c\left(1- (1-\eps)/(1-\frac{\eps}{2})\right)>0$.

Since this holds for every terminal history of $G^n$, it follows from (\ref{eqn:sums}) that the strategy $\rho_\sigma$ of Rowena achieves average expected utility at least $v+c'$ against $\sigma$ in $G^n$.
\end{proof}

Note that the constant $c$ by which the row player can exploit strategy of the column player of entropy $(1-\eps)\beta n$ is related to the possible gain of the row player in the stage game, given that the column player plays strategy of entropy $(1-\eps)\beta$.
To make the connection explicit, we use the following notation from Neyman and Okada~\cite{DBLP:journals/geb/NeymanO00}. Let $G=\langle(A_1,A_2),u \rangle$ be the stage game and for $\gamma\geq 0$ define
\[
U(\gamma)=\max_{\substack{\sigma\in\Delta(A_1)\\ H(\sigma)\leq\gamma}}\min_{a_2\in A_2}\E[u(\sigma,a_2)].
\]
Hence, $U(\gamma)$ is the maximal expected utility the row player can guarantee with a strategy of entropy at most $\gamma$;
or equivalently, $-U(\gamma)$ is the minimal expected utility that the column player can achieve by a best response to any strategy of the row player of entropy at most $\gamma$.
Note that $U(0)$ is equal to the row player's minmax level in pure strategies, and for all $\gamma\geq 0$, $U(\gamma)$ is at most the value of the game.
Using this notation the statement of \thmref{GeneralAdvantage} can be restated as:

 \paragraph{\thmref{GeneralAdvantage} (restated).}
\emph{Let $G=\langle (A_1,A_2), u\rangle$ be a two-player zero-sum strategic game of value $v$ and let $\beta>0$ denote the minimal entropy of a minmax strategy for the row player in $G$.
For any $0<\eps\leq 1$, if $\sigma$ is a strategy of the row player of entropy $(1-\eps)\beta n$ in the $n$-stage repeated game of $G$ then the column player has a deterministic strategy that achieves average payoff of at least $-v-\left(1- \frac{(1-\eps)}{(1-\frac{\eps}{2})}\right)U((1-\frac{\eps}{2})\beta)$ against $\sigma$.}

We remark that an improved bound on the expectation can be obtained using the technique of Neyman and Okada and the column player can in fact achieve average expected utility at least $-v-(\cav U)((1-\eps)\beta)$, where $\cav U$ is the smallest concave function larger or equal than $U$.

\thmref{EpsilonNashZeroSum} below can be seen as  a ``converse'' of \thmref{GeneralAdvantage}. Specifically, we show that even if the players are restricted to strategies of entropy $(1-\eps)\beta n$ then there exists an $\eps'$-Nash equilibrium of $G^n$ for some $\eps'$ proportional to $\eps$.

\begin{theorem}\label{thm:EpsilonNashZeroSum}
Let $G$ be a two-player zero-sum strategic game such that the minimal entropy of a minmax strategy is $\beta>0$ for both players.
There exists $c>0$ such that for all $0<\eps\leq 1$ and for all $n$, there exists a $\left(c\cdot\frac{\lceil n\eps\rceil+1}{n}\right)$-Nash equilibrium of the $n$-stage repeated game of $G$ in which the players' strategies are of entropy at most $(1-\eps)\beta n$.
\end{theorem}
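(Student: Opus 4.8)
The plan is to exhibit an explicit profile of history-independent behavioral strategies that "spends" the full minmax entropy $\beta$ in only $n-\lceil n\eps\rceil$ of the $n$ stages and plays a fixed pure action in the remaining $\lceil n\eps\rceil$ stages. Concretely, for each player $i$ fix a minmax strategy $\hat\sigma_i$ of minimal entropy $\beta$ and an arbitrary pure action $a_i^\star\in A_i$, let $R=\max_a u(a)-\min_a u(a)$ be the range of stage payoffs, and designate a set $S\subseteq\{1,\dots,n\}$ of $|S|=\lceil n\eps\rceil$ "cheap" stages (say the first $\lceil n\eps\rceil$ of them; note $\lceil n\eps\rceil\le n$ since $0<\eps\le 1$, with $\eps=1$ degenerating to the all-cheap, zero-entropy profile). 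Player $i$'s strategy is: at every non-terminal history placing play in stage $t$, play $a_i^\star$ if $t\in S$ and play $\hat\sigma_i$ otherwise, independently of the observed history. Since this strategy ignores the history, its Shannon entropy (\defref{EntropyOfStrategy}) along \emph{every} terminal history equals $\sum_{t\notin S}\beta=(n-\lceil n\eps\rceil)\beta$, which is at most $(1-\eps)\beta n$ because $\lceil n\eps\rceil\ge\eps n$.

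The core of the argument is a stage-by-stage, conditional-on-history estimate of the profitable-deviation gain. Consider an arbitrary (possibly history-dependent, arbitrarily high-entropy) deviation $\sigma_1'$ of Rowena against the prescribed $\sigma_2$. For any non-terminal history $h$ placing play in a stage $t\notin S$, Colin plays $\hat\sigma_2$ regardless of $h$, so, as $\hat\sigma_2$ is a minmax strategy, Rowena's conditional expected stage-$t$ payoff under $(\sigma_1',\sigma_2)$ is at most $v$ — exactly what the prescribed profile earns there, since $(\hat\sigma_1,\hat\sigma_2)$ yields Rowena precisely $v$. For a stage $t\in S$, Colin plays the fixed $a_2^\star$ regardless of $h$, so Rowena's conditional stage-$t$ gain over the prescribed payoff $u(a_1^\star,a_2^\star)$ is at most $\max_{a_1}u(a_1,a_2^\star)-u(a_1^\star,a_2^\star)\le R$. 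Summing these conditional bounds over all $n$ stages by the tower property — exactly the expansion into subhistory sums used in the proof of \thmref{GeneralAdvantage} — shows that Rowena's average payoff against $\sigma_2$ increases by at most $\tfrac{|S|}{n}R=\tfrac{\lceil n\eps\rceil}{n}R$ under any deviation. The symmetric estimate for Colin uses that $\hat\sigma_1$ caps Rowena's payoff from \emph{below} at $v$ in stages outside $S$ (so Colin cannot gain there) and that best-replying to $a_1^\star$ saves Colin at most $R$ per stage of $S$. Taking $c$ to be a small constant multiple of $R$ (a quantity depending only on $G$) and absorbing the rounding into the "$+1$" gives the desired $\bigl(c\cdot\tfrac{\lceil n\eps\rceil+1}{n}\bigr)$-Nash equilibrium with entropy at most $(1-\eps)\beta n$.

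I expect the main obstacle to be making the stage-by-stage aggregation airtight against history-dependent deviations: one must argue that, although a deviator may correlate her stage-$t$ move with the realized outcomes of earlier stages, the opponent's \emph{history-independent} play in stage $t$ still caps the conditional expected stage payoff, so that the per-stage slacks add up rather than compound — this is precisely where it matters that the prescribed strategies discard the history. (If deviations are instead required to respect the entropy bound, the claim only becomes easier, since one is quantifying over a smaller set.) The remaining points are routine: confirming $S$ is well-defined for all $0<\eps\le1$, checking that a best reply to a pure action is pure so nothing in the entropy accounting is affected, and verifying that in the prescribed profile each player's stage payoff outside $S$ is exactly the value $v$ (resp. $-v$), which is what pins the "no gain outside $S$" claim.
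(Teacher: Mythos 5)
Your proposal is correct and follows essentially the same route as the paper's proof: spend the minimal-entropy minmax strategies on a $(1-\eps)$ fraction of the stages and play pure actions on the remaining $\lceil n\eps\rceil$ stages, then bound any deviation's gain by the stage-payoff range times the fraction of pure stages. The only cosmetic difference is that the paper alternates between the two extreme pure payoff profiles in the cheap stages (yielding $c=\tfrac{1}{2}(p^*-p^\dagger)$) while you fix one arbitrary pure profile and absorb the full range $R$ into $c$; both give a constant depending only on $G$, and your stage-by-stage conditional accounting of history-dependent deviations is sound.
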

\begin{proof}
Let $\sigma$ be the strategy profile in the $n$-stage repeated game of $G$ in which the players play in the first $\lfloor n(1-\eps)\rfloor$ stages according to their minmax strategies of minimal entropy (i.e., entropy $\beta$), and in the remaining $\lceil n\eps\rceil$ stages the players alternate between playing the (pure) action profiles $a^*\in A_1\times A_2$ and $a^\dagger\in A_1\times A_2$, such that $p^*=u(a^*)$ is the maximum payoff of Rowena in $G$ and $p^\dagger=u(a^\dagger)$ is the minimal payoff of Rowena in $G$. Note that by construction of $\sigma$, the players use strategies of entropy at most $n\cdot\beta(1-\eps)$.

Assume $\lceil n\eps\rceil$ is odd (the argument for $\lceil n\eps\rceil$ even is analogous).
The expected utility of Rowena in $\sigma$ in the $n$-stage repeated game of $G$ is
\[
\E[u^*(\sigma)]
=
\frac{1}{n}\left(\lfloor n(1-\eps)\rfloor\cdot v + \frac{1}{2}(\lceil n\eps\rceil-1)(p^*+p^\dagger) + p^*\right)\ ,
\]
where $v$ is the value of $G$.
The expectation of every deviating strategy $\sigma'_2$ of Colin is
\[
-\E[u^*(\sigma_1,\sigma'_2)]
\leq
\frac{1}{n}\left(-\lfloor n(1-\eps)\rfloor\cdot v + \frac{1}{2}(\lceil n\eps\rceil-1)(-p^\dagger-p^\dagger) - p^\dagger\right)\ ,
\]
hence Colin can increase his utility by at most $\frac{1}{2n}(p^*-p^\dagger)(\lceil n\eps\rceil+1)$.
Similarly, the increase in expectation from any deviating strategy of Rowena can be upper bounded by $\frac{1}{2n}(p^*-p^\dagger)(\lceil n\eps\rceil-1)$.
Therefore, $\sigma$ is a $\left(c\cdot\frac{\lceil n\eps\rceil+1}{n}\right)$-Nash equilibrium of the $n$-stage repeated game of $G$ for $c=\frac{1}{2}(p^*-p^\dagger)$, and the statement of the proposition follows since $\frac{1}{2}(p^*-p^\dagger)$ is a constant independent of $\eps$ and $n$.
\end{proof}

\section{Matching Pennies}\applab{MatchingPennies}
The game of \emph{matching pennies} is a two-player zero-sum strategic game given by the payoff matrix in Figure~\ref{fig:MP}. Both players can either play Heads ($H$) or Tails ($T$). The only Nash equilibrium is the strategy profile $(\frac{1}{2}H+\frac{1}{2}T,\frac{1}{2}H+\frac{1}{2}T)$ in which both players randomize uniformly over $H$ and $T$.
\begin{figure}[h!]%
\vspace{-1em}
\def\arraystretch{1.5}
\begin{center}
\begin{tabular}{r|c|c|}
\multicolumn{1}{r}{}
 &  \multicolumn{1}{c}{Heads ($H$)}
 & \multicolumn{1}{c}{Tails ($T$)} \\
\cline{2-3}
~~~Heads ($H$)~~~ & $~~1,-1$ & $-1,~~1$ \\
\cline{2-3}
~~~Tails ($T$)~~~ & $-1,~~1$ & $~~1,-1$ \\
\cline{2-3}
\end{tabular}
\end{center}
\vspace{-1em}
\caption{The payoff matrix of the game of matching pennies.}%
\label{fig:MP}%
\vspace{-1em}
\end{figure}
By \thmref{LinearEntropy}, in the equilibrium for the $n$-stage repeated game of matching pennies both players randomize uniformly between playing Heads and Tails at each stage, and the entropy of the equilibrium strategy of each player is exactly $n$.

We now give a generalization of Lemma~3.1 from Budinich and Fortnow~\cite{DBLP:conf/sigecom/BudinichF11}.

\begin{theorem}\thmlab{LowEntropyNoNashMP}
For any $\eps\in[0,1]$, let $\sigma$ be a strategy of the column player of entropy $n(1-\eps)$ in the $n$-stage repeated game of matching pennies. The row player has a deterministic strategy that achieves payoff of at least $\eps$ against $\sigma$.
\end{theorem}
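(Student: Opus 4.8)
The plan is to instantiate the general machinery of \thmref{GeneralAdvantage} (and its underlying \lemmref{LowEntropyOneShot}) for the specific game of matching pennies, where all the relevant quantities can be computed exactly rather than merely bounded by an unspecified constant. In matching pennies the row player's minmax strategy is the uniform distribution over $\{H,T\}$, which has entropy exactly $1$, so $\beta = 1$; the value of the game is $v = 0$. Thus a strategy $\sigma$ of entropy $n(1-\eps)$ is precisely a strategy of ``entropy $(1-\eps)\beta n$'' in the language of \thmref{GeneralAdvantage}, and we want to show Rowena can guarantee average payoff at least $\eps = v + \eps$.

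The key step is the exact one-shot computation replacing \lemmref{LowEntropyOneShot}: if Colin plays an action with probability $(p, 1-p)$ on $(H,T)$, then Rowena's best response (play $H$ if $p \geq 1/2$, else $T$) earns $|2p-1| = |1 - 2\min(p,1-p)|$. Writing $q = \min(p,1-p) \in [0,1/2]$, Colin's entropy is the binary entropy $H_2(q)$ and Rowena's best-response payoff is $1 - 2q$. Since $H_2(q) \geq 2q$ on $[0,1/2]$ (both vanish at $0$, and $H_2$ is concave with slope $+\infty$ at $0$ while $2q$ is linear — or one checks $H_2(q) = q\log(1/q) + (1-q)\log(1/(1-q)) \geq q\log 2 + (1-q)\cdot 0 \cdot(\text{...})$, more carefully $H_2(q)\ge q\log(1/q)\ge q$ for $q\le 1/2$ is already enough, and in fact $H_2(q)\ge 2q$ holds), we get that Rowena's best-response payoff is at least $1 - H_2(q)$. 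In other words, at any stage where Colin uses conditional entropy $\eta$, Rowena's best response earns at least $1 - \eta$.

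Now I would push this through the repeated game exactly as in the proof of \thmref{GeneralAdvantage}, but using the sharp per-stage inequality instead of a case split. Fix the deterministic strategy $\rho_\sigma$ for Rowena that at every non-terminal history plays the best response to Colin's conditional distribution $\sigma(h)$ (this is deterministic since Colin's strategy is fixed and known, and the best response to any distribution on $\{H,T\}$ is a pure action). Using the decomposition of Rowena's average payoff into a sum of conditional expectations over subhistories of a terminal history $b$ — identical to equation~(\ref{eqn:sums}) in the proof of \thmref{GeneralAdvantage} — along each terminal history $b$ the total payoff is at least $\sum_{j=0}^{n-1}\bigl(1 - H(\sigma(b^1,\ldots,b^j))\bigr) = n - \sum_{j} H(\sigma(b^1,\ldots,b^j)) \geq n - (1-\eps)n = \eps n$, where the last inequality is the definition of $H(\sigma) \leq (1-\eps)n$ from \defref{EntropyOfStrategy}. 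Since this bound holds for every terminal history, averaging over the distribution $(\rho_\sigma,\sigma)$ gives average expected payoff at least $\eps$.

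I expect the main (though minor) obstacle to be the bookkeeping in the per-history decomposition — making sure that the per-stage bound ``$1 - (\text{conditional entropy})$'' is applied to the correct conditional distribution, and that summing the per-stage entropies along a fixed terminal history and invoking the worst-case bound in \defref{EntropyOfStrategy} is legitimate (it is, since that definition takes a max over terminal histories of exactly this sum). The elementary inequality $H_2(q) \geq 2q$ for $q \in [0,1/2]$, or even just $H_2(q)\ge$ (Rowena's best-response payoff complement), also needs a one-line justification but is standard. No concavification or $\operatorname{cav}$ argument is needed here because the per-stage bound $U_{\mathrm{MP}}(\gamma) \le \gamma - 1$ rearranged gives exactly $1-\gamma$ as the opponent's gain, and this function is already affine, so it equals its own concavification — which is why the clean bound $\eps$ (rather than some smaller constant) comes out.
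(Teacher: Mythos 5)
Your argument is correct and is essentially the paper's second (``alternative'') proof of this theorem: the same decomposition of the average payoff into conditional expectations along each terminal history, the same per-stage bound that the best response earns at least $1-H_2(q)$ where $q$ is the probability of Colin's less likely action (the paper phrases this as $2p-1\ge 1+p\log_2 p+(1-p)\log_2(1-p)$ for $p\in[1/2,1]$, by convexity), and the same summation along a terminal history against \defref{EntropyOfStrategy}. One small caution: your parenthetical remark that $H_2(q)\ge q$ ``is already enough'' is not enough --- the argument genuinely requires $H_2(q)\ge 2q$ on $[0,1/2]$ (concavity of $H_2$ versus its chord through $(0,0)$ and $(1/2,1)$) --- but since you also state and use that correct inequality, there is no actual gap.
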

\begin{proof}
Let $\rho_\sigma$ be the strategy of the row player (Rowena) that finds the most likely action of the column player (Colin) at each history and plays the best response to that action. For any stage $t=1,\ldots ,n$, and any terminal history $a\in A^n$ we denote by $p_{a}^{t}$ the probability of Colin's most likely action at stage $t$ at the subhistory $(a^1,\ldots,a^{t-1})$.

Consider the following function $\varphi: A^n\times \{1,\ldots ,n+1\}\rightarrow\RR$ defined for any terminal history $a\in A^n$ and any $t\in\{1,\ldots ,n, n+1\}$ as:
\[
\varphi(a,t)=\sum_{i=1}^{t-1}u(a^i) - H(\sigma_{a}^{t})\ ,
\]
where $\sigma_{a}^{t}\in\Delta(\times_{j=t}^{n}A_2)$ is the distribution of the actions taken by Colin in $\sigma$ at stages $t,\ldots , n$ given the history of the play up to stage $t$ is $(a^{1},\ldots,a^{t-1})$.  Note that for $t = n+1$, Colin has no more actions to take, and by convention we write $H(\sigma_{a}^{n+1}) = 0$, so that $\varphi(a,n+1)=\sum_{i=1}^{n}u(a^{i})$ (i.e.~the total accumulated utility of Rowena at the terminal history $(a^{1}, \dots, a^{n})$.  Also note that for any terminal history $a$ the value of $\varphi(a,1)$ is $-H(\sigma_{a}^{1})=-H(\sigma)$, i.e., minus entropy of the distribution $\sigma$ of Colin's play in all the $n$ stages.

Now consider the expected increase in $\varphi$ between two consecutive stages when Colin's actions are drawn from $\sigma$ and Rowena's actions are chosen according to $\rho$, i.e., for every $t\in\{1,\ldots,n\}$ consider
\[
\E_{a\gets(\rho_\sigma,\sigma)}\left[\varphi(a,t+1)-\varphi(a,t)\right]\ .
\]
We expand the above using the definition of $\varphi$ and get
\[
\E_{a\gets(\rho_\sigma,\sigma)}\left[\left(\sum_{i=1}^{t}u(a^i)-\sum_{i=1}^{t-1}u(a^i)\right)+\left(-H(\sigma_{a}^{t+1})+ H(\sigma_{a}^{t})\right)\right]\ .
\]
Which can be simplified using the probability of the most likely action of Colin at history $(a_1,\ldots, a_{t-1})$ as
\[
\E_{a\gets(\rho_\sigma,\sigma)}\left[2p_{a}^{t}-1+\left(-H(\sigma_{a}^{t+1})+ H(\sigma_{a}^{t})\right)\right]\ .
\]
We can expand the first entropy term
\[
\E_{a\gets(\rho_\sigma,\sigma)}\left[2p_{a}^{t}-1+\left(-\left(p_{a}^{t}\cdot{}H(\sigma_{a}^{t}|a_{2}^{t}=\heartsuit))+(1-p_{a}^{t})\cdot{}H(\sigma_{a}^{t}|a_{2}^{t}=\spadesuit)\right)+ H(\sigma_{a}^{t})\right)\right]\ ,
\]
where $\heartsuit$ denotes the most likely action of Colin at stage $t$ after history $(a^1,\ldots,a^{t-1})$ and $\spadesuit$ denotes its alternative.
We can rewrite the expression using the definition of conditional entropy to
\[
\E_{a\gets(\rho_\sigma,\sigma)}\left[2p_{a}^{t}-1+\left(-H(\sigma_{a}^{t}|\varsigma_{a}^{t})+ H(\sigma_{a}^{t})\right)\right]\ ,
\]
where $\varsigma_{a}^{t}\in\Delta(A_2)$ denotes the distribution of Colin's action at stage $t$ after the history $(a^1,\ldots,a^{t-1})$.
Because of the chain rule for conditional entropy we get that
\begin{align*}
\E_{a\gets(\rho_\sigma,\sigma)}\left[\varphi(a,t+1)-\varphi(a,t)\right]
&=
\E_{a\gets(\rho_\sigma,\sigma)}\left[2p_{a}^{t}-1+H(\varsigma_{a}^{t})\right]\\
&\geq
\E_{a\gets(\rho_\sigma,\sigma)}\left[2p_{a}^{t}-1+(-2p_{a}^{t}+2))\right]\\
&\geq 1\ .
\end{align*}

Finally, we use the above lower bound on the expected increase of $\varphi$ to bound the expectation of Rowena when the players play according to the strategy profile $(\rho_\sigma,\sigma)$
\begin{align*}
\E_{a\gets(\rho_\sigma,\sigma)}\left[u^{*}(a)\right]\cdot n
&=
\E_{a\gets(\rho,\sigma)}\left[\varphi(a,n+1)\right]\\
&\geq
\E_{a\gets(\rho_\sigma,\sigma)}\left[\varphi(a,1)\right] + n\cdot\min_{t\in[n]}\left\{\E_{a\gets(\rho_\sigma,\sigma)}\left[\varphi(a,t+1)-\varphi(a,t)\right]\right\}\\
&\geq
 -H(\sigma) + n=-n(1-\eps)+n=n\eps\ .
\end{align*}
Therefore, the expected average payoff of Rowena is at least $\eps$.
\end{proof}

We give also an alternative and more straightforward proof of \thmref{LowEntropyNoNashMP} that follows the structure of the proof of \thmref{GeneralAdvantage}.
\begin{proof}[Proof of \thmref{LowEntropyNoNashMP} (alternative)]
Let $\sigma$ be an arbitrary strategy of Colin of Shannon entropy $n(1-\eps)$ for some $\eps\in [0,1]$.
Let $\rho_\sigma$ be the strategy of Rowena that at each non-terminal history $a$ plays the best response to Colin's strategy $\sigma(a)$. We can express Rowena's expectation $\E_{a\gets(\rho_\sigma,\sigma)}[u^{*}(a)]$ as
\[
%\E_{a\gets(\rho_\sigma,\sigma)}[u^{*}(a)]
%=
\frac{1}{n}\left(\E_{a\gets(\rho_\sigma,\sigma)}[u(a^1)]+ \E_{a\gets(\rho_\sigma,\sigma)}[u(a^2)|a^1] + \cdots
	+ \E_{a\gets(\rho_\sigma,\sigma)}[u(a^n)|(a^1,\ldots , a^{n-1})]\right)\ ,
\]
which can be rewritten due to the definition of conditional expectation as a summation over terminal histories
\begin{multline*}
%\E_{a\gets(\rho_\sigma,\sigma)}[u^{*}(a)]
% =
\frac{1}{n}\Bigg(\sum_{b\in A^{n}}
(\rho_\sigma,\sigma)(b)\cdot\bigg(\E_{a\gets(\rho_\sigma,\sigma)}[u(a^1)]+ \E_{a\gets(\rho_\sigma,\sigma)}[u(a^2)|a^1=b^1] + \cdots\\
	+ \E_{a\gets(\rho_\sigma,\sigma)}[u(a^n)|(a^1,\ldots , a^{n-1})=(b^1,\ldots , b^{n-1})]\bigg)\Bigg)\ .
\end{multline*}

For every terminal history $b=(b_1,\ldots,b_{n})$, the total entropy of $\sigma$ over the non-terminal subhistories of $b$ is bounded by $n(1-\eps)$, i.e.,
\begin{equation}\label{eqn:entropySummands}
H(\sigma(\emptyset))+\sum_{i=1}^{n-1}{H(\sigma(b_1,\ldots,b_i))}\leq n(1-\eps)\ .
\end{equation}
We define $\eps_0=1-H(\sigma(\emptyset))$ and for every $i\in\{1,\ldots,n-1\}$ we define $\eps_i=(1-H(\sigma(b_1,\ldots,b_i)))$.
Note that $0\leq\eps_i\leq 1$ for every $i\in\{0,\ldots,n-1\}$ and from inequality~(\ref{eqn:entropySummands}) we get that $\eps\leq\frac{1}{n}\sum_{i=0}^{n-1}\eps_i$.
In order to conclude that Rowena's expected utility in the strategy profile $(\rho_\sigma,\sigma)$ is at least $\epsilon$, it is sufficient to show that for every subhistory $b'$ of $b$ the expectation $\E[u(\rho_{\sigma}(b'),\sigma(b'))]$ is least $1-H(\sigma(b'))$.

For an arbitrary non-terminal history $h$, consider Rowena's expectation in $G$ given the strategy profile $(\rho_\sigma(h),\sigma(h))$. Since $\rho_\sigma(h)$ is the best response to $\sigma(h)$, Rowena's expectation is $2p-1$, where $p$ is the probability of Colin's most probable action at history $h$. We need to show that for all $p\in[1/2,1]$
\[
2p-1 \geq 1 - H(\sigma(h))=1+p\log_2(p)+(1-p)\log_2(1-p)\ .
\]
For $p$ equal $1/2$ or $1$, the left side and the right side of the inequality are equal. Since $2p-1$ is a linear function and $1+p\log_2(p)+(1-p)\log_2(1-p)$ is a convex function on $[1/2,1]$, the inequality holds. This concludes the proof.
\end{proof}

It follows form \thmref{LowEntropyNoNashMP} that if the players can use only strategies of entropy $(1-\epsilon)n$ (i.e., lower than $n$-times the entropy of an equilibrium of the single-shot matching pennies) then Nash equilibria in the $n$-stage repeated game of matching pennies do not exist.

\begin{proposition}\label{thm:MPtradeoffs}
Let $G^n$ be the $n$-stage repeated game of matching pennies.
\begin{enumerate}
\item
 For all $0\leq\eps\leq 1$, if $\sigma$ is an $\eps$-Nash equilibrium of $G^n$ then the players' strategies in $\sigma$ are of entropy at least $n(1-\eps)$.
\item
 For all $0\leq\eps\leq 1$, there exists an $(\eps+\frac{2}{n})$-Nash equilibrium of $G^n$ in which the players' strategies are of entropy at most $(1-\eps)n$.
\end{enumerate}
\end{proposition}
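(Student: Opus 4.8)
The plan is to derive both parts from results already in hand: part~(2) is a specialization of \thmref{EpsilonNashZeroSum}, and part~(1) follows from \thmref{LowEntropyNoNashMP}, used in both orientations (matching pennies is symmetric), together with the elementary behaviour of $\eps$-equilibria in zero-sum games.

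\emph{Part~(2).} I would instantiate \thmref{EpsilonNashZeroSum} for matching pennies. The unique minmax strategy of each player is the uniform distribution, of entropy $\beta=1$, and Rowena's extreme stage payoffs are $p^{*}=1$ and $p^{\dagger}=-1$, so the constant produced by \thmref{EpsilonNashZeroSum} is $c=\tfrac12(p^{*}-p^{\dagger})=1$. Hence for every $0<\eps\le 1$ there is a $\big(\tfrac{\lceil n\eps\rceil+1}{n}\big)$-Nash equilibrium of $G^{n}$ whose strategies have entropy at most $(1-\eps)n$; since $\lceil n\eps\rceil\le n\eps+1$ this quantity is at most $\eps+\tfrac2n$, and an $x$-Nash equilibrium is a fortiori a $y$-Nash equilibrium for every $y\ge x$, which settles $\eps>0$. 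For $\eps=0$ the exact Nash equilibrium of $G^{n}$ (both players uniform in every stage) has entropy exactly $n=(1-0)n$ and is in particular a $\tfrac2n$-Nash equilibrium.

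\emph{Part~(1).} I would argue by contradiction. Suppose $\sigma=(\sigma_1,\sigma_2)$ is an $\eps$-Nash equilibrium and, without loss of generality, $H(\sigma_2)<(1-\eps)n$. By \thmref{LowEntropyNoNashMP} the row player has a deterministic deviation achieving average payoff at least $1-H(\sigma_2)/n>\eps$ against $\sigma_2$. I would then show that the row player's equilibrium payoff $w=\E[u^{*}(\sigma_1,\sigma_2)]$ is so close to the value $v=0$ of matching pennies that this deviation improves on $w$ by more than $\eps$, contradicting the $\eps$-equilibrium property. For the basic upper bound $w\le v+\eps$ one uses that the column player can always cap the row player at $v$ by playing the uniform strategy in every stage; to push $w$ even closer to $v$ (which is what yields the tight constant), one applies the symmetric form of \thmref{LowEntropyNoNashMP} to $\sigma_1$ together with the column player's $\eps$-equilibrium condition to conclude that $\sigma_1$ must be nearly uniform at almost every history, and a nearly uniform row strategy concedes average payoff near $v$ in matching pennies. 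The symmetric argument bounds $H(\sigma_1)$.

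The step I expect to be the main obstacle is the quantitative core of part~(1): pinning $w$ close enough to $v$ to obtain the stated constant $(1-\eps)n$ rather than a weaker $(1-O(\eps))n$ — concretely, controlling the per-stage bias $x_t$ of the (nearly uniform) strategy $\sigma_1$ via the binary-entropy estimate $h(\tfrac12+x_t)=1-\Theta(x_t^{2})$ and summing over the $n$ stages with Cauchy--Schwarz. Everything else — \thmref{LowEntropyNoNashMP} and its mirror image, the fact that the realized payoff of an $\eps$-Nash equilibrium of a zero-sum game lies within $\eps$ of the value, and the bound $H(\sigma_i)\le n$ coming from each stage being a distribution over two actions — is routine.
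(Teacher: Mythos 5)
Your Part~(2) is correct and is essentially the paper's argument: the paper proves Part~(2) by writing out exactly the construction underlying \thmref{EpsilonNashZeroSum} (uniform play in the first $\lfloor(1-\eps)n\rfloor$ stages, then a deterministic alternation between the extreme action profiles), so invoking that theorem with $\beta=1$, $p^{*}=1$, $p^{\dagger}=-1$, $c=1$ is a legitimate shortcut, and your separate treatment of $\eps=0$ is fine.

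Part~(1) has a genuine gap, and the step you single out as ``the main obstacle'' is not merely hard but impossible. You reduce to the case where a \emph{single} player, say Colin, has entropy below $(1-\eps)n$ and then need Rowena's equilibrium payoff $w$ to lie within less than $\eps'-\eps$ of the value $0$. But in an $\eps$-Nash equilibrium $w$ can genuinely be of order $\eps$, and the per-player claim you are trying to prove is false. Concretely (for, say, $0<\eps\leq 1/2$ and large $n$, up to $O(1/n)$ rounding): let both players play uniformly in the first $n-m$ stages with $m=\lfloor 1.4\,\eps n\rfloor$; in the last $m$ stages let Colin play $H$ deterministically and Rowena play $H$ independently with probability $0.66$ at each stage. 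Colin's entropy is $(1-1.4\eps)n<(1-\eps)n$, Rowena's payoff is $w=0.448\,\eps$, Rowena's best deviation gains $0.952\,\eps<\eps$, and Colin's best deviation gains $0.896\,\eps<\eps$, so this is an $\eps$-Nash equilibrium. Your hoped-for refinement (``$\sigma_1$ must be nearly uniform, hence concedes payoff near $v$'') is exactly what fails here: Rowena's bias costs her only $O(\eps)$ entropy, so she stays above the threshold, yet it concedes $\Theta(\eps)$ payoff --- precisely the slack the $\eps$-equilibrium condition tolerates. The paper's proof avoids all of this by assuming that \emph{both} players have entropy $(1-\eps')n$ with $\eps'>\eps$: then \thmref{LowEntropyNoNashMP}, applied in both orientations, gives each player a deviation earning at least $\eps'$, so the $\eps$-equilibrium condition forces $\E[u^{*}_{i}(\sigma)]\geq\eps'-\eps>0$ for both $i$ simultaneously, contradicting the zero-sum structure --- no control of $w$ is needed. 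Note that this only rules out the two players being simultaneously randomness-deficient; that weaker reading of the statement is the one the paper actually establishes, and the example above shows nothing stronger can hold.
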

\begin{proof}
First, we show that any $\eps$-Nash equilibrium $\sigma$ in the $n$-stage repeated game of matching pennies comprises of strategies of entropy at least $(1-\eps)n$. Assume that there is an $\eps$-Nash equilibrium in which both players use a strategy of strictly smaller entropy than $(1-\eps)n$, i.e., of entropy $(1-\eps')n$ for some $\eps'>\eps$. By \thmref{LowEntropyNoNashMP}, each player $i$ has a strategy $\sigma'_i$ that achieves at least $\eps'$ against $\sigma_{-i}$. Since $\sigma$ is an $\eps$-Nash equilibrium then for any player $i$
\[
	\E[u^{*}_{i}(\sigma)]\geq\E[u^{*}_{i}(\sigma'_i,\sigma_{-i})]-\eps \geq \eps' - \eps > 0\ .
\]
This implies that for both players $\E[u^{*}_{i}(\sigma)] >0$, however it cannot be the case that the expectation of both players is strictly larger than zero, since matching pennies is a zero-sum game.

Second, we show that if the players can use strategies of entropy $(1-\eps)n$ then there exists an $(\eps+\frac{2}{n})$-Nash equilibrium of the $n$-stage repeated game of matching pennies.
To see this, consider a strategy profile in which the players play uniformly at random $H$ and $T$ in the first $\lfloor (1-\eps)n\rfloor$ stages and in the remaining $\lceil\eps n\rceil$ stages Rowena plays always $H$ and Colin alternates between $T$ and $H$ (i.e., the outcome at stage $\lfloor (1-\eps)n\rfloor + 1$ is $(H,T)$).
If $\lceil\eps n\rceil$ is odd then Rowena's expectation is $-\frac{1}{n}$ and otherwise it is $0$.
Both Colin and Rowena can improve their expectation only in the last $\lceil\eps n\rceil$ stages by matching/countering the opponent, but any such deviation can achieve utility at most
\[
\frac{\lceil\eps n\rceil}{n} \leq \frac{\eps n + 1}{n}\leq\eps + \frac{1}{n}\ .
\]
Hence, both players can improve the utility by at most $\eps + \frac{2}{n}$ by deviating from the prescribed strategy profile, and it constitutes an $(\eps + \frac{2}{n})$-Nash equilibrium.
\end{proof}

\subsection{Matching Pennies with Computationally Efficient Players}
In this section we prove the statement of \thmref{EfficientAdvantage} for the special case of the game of matching pennies
without relying on the framework of adaptively changing distributions of Naor and Rothblum~\cite{DBLP:conf/icml/NaorR06},
but using the classical results on pseudorandomness discussed in \secref{CryptoNotation}.
In particular, that if one-way functions do not exist, then the players cannot efficiently generate unpredictable sequences of bits using only a few truly random bits.
Hence, in the repeated game of matching pennies any player can at some stage efficiently predict and exploit the next move of an opponent that uses amount of random bits sub-linear in the number of stages.

\begin{theorem}\thmlab{EfficientAdvantageMP}
If one-way functions do not exist then for any polynomial-size circuit family $\{C_n\}_{n\in\NN}$ implementing a strategy of Colin in the repeated game of matching pennies using at most $n-1$ random bits, there exists a polynomial time strategy of Rowena with expected utility $\delta(n)$ against $C_n$ for some noticeable function $\delta$.
\end{theorem}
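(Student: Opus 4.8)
The plan is to convert the assumption ``one-way functions do not exist'' into an efficient next-bit predictor for Colin's moves, in the spirit of Blum--Micali and Yao, and then let Rowena play her minmax (uniform) strategy at every stage except the single stage at which she can predict Colin's action and match it.

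Concretely, fix $n$ and let $m=n-1$ be the number of random bits Colin uses. For a seed $s\in\{0,1\}^{m}$ and an auxiliary string $\rho\in\{0,1\}^{m}$, let $c_1,\dots,c_n$ be the actions Colin plays in $G^n$ when Rowena plays $\rho_t$ at each stage $t\le n-1$; that is, $c_t=C_n(h_{t-1},s)$ for the induced history $h_{t-1}$ (Rowena's last action does not affect any $c_t$). Define $G'(s,\rho):=(\rho,c_1,\dots,c_n)\in\{0,1\}^{2n-1}$. Since $C_n$ is a polynomial-size circuit, $G'$ is polynomial-time computable, and it stretches $2m=2n-2$ bits to $2n-1$ bits, so it is a candidate pseudorandom generator. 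Prepending $\rho$ is the crucial trick: it will let Rowena use \emph{fresh} uniform randomness at every non-prediction stage, so she loses nothing in expectation there while still being able to feed the predictor the ``public history'' it expects.

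Now suppose one-way functions do not exist, so $G'$ is not pseudorandom; by the contrapositive of Yao's theorem there is an index $i=i(n)$ and a polynomial-time predictor $P$ guessing the $i$-th output bit of $G'(U_{2n-2})$ from the preceding $i-1$ output bits with advantage at least $\eps(n)/(2n-1)$ over $1/2$, where $\eps(n)$ is the distinguishing advantage. Since the first $n-1$ output bits of $G'$ are literally $\rho$ -- genuinely uniform and independent of $s$ -- the hybrid argument forces $i\ge n$; writing $i=(n-1)+t$ with $1\le t\le n$, $P$ predicts $c_t$ from $(\rho,c_1,\dots,c_{t-1})$. Rowena's strategy is then: sample $\rho\gets U_{n-1}$ at the outset, play $\rho_1,\dots,\rho_{t-1}$ in the first $t-1$ stages, at stage $t$ set $\hat c_t=P(\rho,c_1,\dots,c_{t-1})$ from the observed history and play $\hat c_t$ (thereby matching Colin), and in every remaining stage play a fresh uniform bit. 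At any stage other than $t$, Rowena's bit is uniform and independent of Colin's action at that stage -- Colin's stage-$j$ action is a function only of $s$ and the history through stage $j-1$ -- so her expected payoff there is exactly $0$. At stage $t$, the actions $c_1,\dots,c_t$ she faces coincide with those of $G'(s,\rho)$ (they depend only on her first $t-1$ moves, which are $\rho_1,\dots,\rho_{t-1}$), so $\Pr[\hat c_t=c_t]\ge 1/2+\eps(n)/(2n-1)$ and her expected stage-$t$ payoff is at least $2\eps(n)/(2n-1)$. Averaging, her expected utility against $C_n$ is at least $\eps(n)/n^2$.

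The main obstacle is the ``infinitely-often versus almost-everywhere'' bookkeeping needed to conclude that $G'$ has a predictor with $\eps(n)\ge 1/\mathrm{poly}(n)$ for \emph{all} large $n$ rather than only infinitely many -- which is exactly why the paper phrases everything in terms of \emph{almost} one-way functions. I would invoke the ``almost'' version of the HILL/Yao correspondence (the same quantifier regime underlying Theorem~\ref{thm:NR06}): if no almost-one-way function exists, then the fixed, efficiently computable family $G'$ is distinguishable from uniform with advantage $1/\mathrm{poly}(n)$ for all sufficiently large $n$, making $\delta(n)\approx\eps(n)/n^2$ noticeable. A smaller point to handle carefully is the independence claim at the non-prediction stages after stage $t$: since $\hat c_t$, and hence the history from stage $t+1$ on, depends on all of $\rho$, Rowena must use genuinely fresh coins there (not the unused suffix of $\rho$) to keep her action independent of Colin's at each such stage.
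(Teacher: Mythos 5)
Your proposal is correct and follows essentially the same route as the paper's proof: both turn the transcript of the repeated game into a stretch-one candidate pseudorandom generator, derive a Yao/Blum--Micali next-bit predictor from the non-existence of (almost) one-way functions, observe that the predicted bit must be one of Colin's because Rowena's bits are fresh uniform coins, and have Rowena play uniformly at every stage except the one where she matches the predicted action. The only substantive differences are cosmetic or in your favor: you front-load Rowena's coins and use a fixed prediction index where the paper interleaves the transcript and uses an adaptive predictor, and you are more careful than the paper about the infinitely-often versus almost-everywhere quantifiers (the theorem as stated with ``one-way functions do not exist'' really needs the ``almost one-way'' hypothesis, as in \thmref{EfficientAdvantage}, to yield a noticeable $\delta$).
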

\begin{proof}
Let $\{X_n\}_{n\in\NN}$ be a probability ensemble defined for all $n$ as the random variable over $2n$-bit strings corresponding to the terminal histories in the $n$-stage repeated matching pennies (where $H$ corresponds to $0$ and $T$ to $1$) when Rowena plays uniformly at random and Colin plays according to $C_n$.
Note that $X_n$ is of length $2n$ and it can be generated in polynomial time given at most $2n-1$ random bits, since Colin's strategy uses random strings of length at most $n-1$.

Since one-way functions do not exist, the ensemble $\{X_n\}_{n\in\NN}$ cannot be pseudorandom. In particular, it cannot be unpredictable in polynomial time in the following sense. There exists a polynomial time predictor algorithm $A$ that reads $x\gets X_n$ bit by bit and succeeds in predicting the next value with probability noticeably larger than one half. Formally, let $\next_A(x)$ be a function that returns the $i$-th bit of $x$ if on input $(1^{|x|},x)$ algorithm $A$ reads only the first $i-1 < |x|$ bits of $x$, and returns a uniformly chosen bit in case $A$ reads the entire string $x$. There exists a predictor algorithm $A$ and some positive polynomial $p$, such that
\[
	\Pr[A(1^{|X_n|},X_n) = \next_A(X_n)] \geq \frac{1}{2} + \frac{1}{p(n)}\ ,
\]
where the probability is taken over the randomness of $A$.

We show that Rowena can guarantee for herself at least noticeable expected utility by emulating $A$ on the transcript of the repeated game.
Consider the strategy $R_A$ of Rowena that at each stage $i$ samples a uniformly random bit $r_i$, and if $A(b_i)$ outputs any prediction $c^{*}_i$ of Colin's action then Rowena plays $c^{*}_i$ (to match Colin) and otherwise it plays $r_i$ and uses the action played by Colin at stage $i$ as the next input to $A$. After the stage in which $A$ outputs a prediction $R_A$ plays uniformly at random. The expectation of Rowena can be lower bounded in the following way:
\begin{multline*}
\E[u^*(R_A,C)]
\geq
\frac{1}{n}\Bigg(\Pr[A\text{ outputs } c^{*}_{i}]\cdot \left((n-1)\cdot 0+ 2\left(\frac{1}{2}+\frac{1}{p(n)}\right)-1\right)\\
 + (1-\Pr[A\text{ outputs } c^{*}_{i}])\cdot 0\Bigg)\ .
\end{multline*}
Recall that the actions of Rowena are chosen uniformly at random and the predictor $A$ has to guess a  uniformly random bit if it reads the whole terminal history $x\gets X_n$.
Hence, in order to gain noticeable advantage over one half, $A$ must output its prediction to one of the actions of Colin with at least noticeable probability, i.e., $\Pr[A\text{ outputs } c^{*}_{i}]$ is at least $\delta'(n)$ for some noticeable function $\delta'$.
Thus, the strategy $R_A$ achieves expectation at least $\delta(n)=\delta'(n)\cdot(n\cdot p(n))^{-1}$, which is a noticeable function of $n$.
\end{proof}

%------------------------------ End of Matching Pennies -----------------------%

\end{document}